\newtheorem{theorem}{Theorem}
\newtheorem{proposition}{Proposition}
\newtheorem{lemma}{Lemma}
\newtheorem{definition}{Definition}
\newcommand{\V}{\mathcal{V}}
\newcommand{\E}{\mathcal{E}}
\renewcommand{\S}{\mathcal{S}}
\newcommand{\A}{\mathcal{A}}
\DeclareMathOperator{\cost}{cost}
\renewcommand{\Pr}{\mathbb{P}}
\DeclareMathOperator{\Exp}{\mathbb{E}}
\DeclareMathOperator{\QS}{\mathrm{QS}}
\newcommand{\tO}{\widetilde{O}}
\newcommand{\tOmega}{\widetilde{\Omega}}
\newcommand{\Id}{\mathbb{I}}
\def\ket#1{|{#1}\rangle}
\def\braket#1{\langle{#1}\rangle}
\begin{document}

\title{Expansion Testing using Quantum Fast-Forwarding and Seed Sets}
\author{Simon Apers}
\affiliation{CWI (The Netherlands) and ULB (Belgium)}
\email{smgapers@gmail.com}
\orcid{0000-0003-3823-6804}
\thanks{Most of the work was done while part of the CWI-Inria International Lab.}

\maketitle

\begin{abstract}
Expansion testing aims to decide whether an $n$-node graph has expansion at least $\Phi$, or is far from any such graph.
We propose a quantum expansion tester with complexity $\widetilde{O}(n^{1/3}\Phi^{-1})$.
This accelerates the $\widetilde{O}(n^{1/2}\Phi^{-2})$ classical tester by Goldreich and Ron [Algorithmica '02], and combines the $\widetilde{O}(n^{1/3}\Phi^{-2})$ and $\widetilde{O}(n^{1/2}\Phi^{-1})$ quantum speedups by Ambainis, Childs and Liu [RANDOM '11] and Apers and Sarlette [QIC '19], respectively.
The latter approach builds on a quantum fast-forwarding scheme, which we improve upon by initially growing a seed set in the graph.
To grow this seed set we use a so-called evolving set process from the graph clustering literature, which allows to grow an appropriately local seed set.
\end{abstract}

\section{Introduction and Summary}

The (vertex) expansion of a graph is a measure for how well connected the graph is.
For an undirected graph $G = (\V,\E)$, with $|\V| = n$ and $|\E| = m$, it is defined as
\[
\Phi(G)
= \min_{\S \subset \V: |\S|\leq n/2} \frac{|\partial \S|}{|\S|},
\]
where $\partial \S$ is the set of nodes in $\V\backslash\S$ that have an edge going to $\S$.
See \cite{leighton1999multicommodity} for a discussion on the relevance of expansion for a range of graph approximation algorithms, and \cite{hoory2006expander} for a survey on expander graphs and their applications.
Since exactly determining $\Phi(G)$ is an NP-hard problem \cite{louis2013complexity}, we consider the relaxed objective of \textit{testing} the expansion.
Goldreich and Ron \cite{goldreich2002property,goldreich2011testing} initially studied this problem in the bounded-degree model, where they proposed the following question: given query acces to~$G$, does it have expansion at least some~$\Phi$, or is it far from any graph having expansion~$\tOmega(\Phi^2)$?
In this model, given graphs $G$ and $G'$ with degree bound~$d$, $G$ is $\epsilon$-far from $G'$ if at least $\epsilon nd$ edges have to be added or removed from $G$ to obtain $G'$.
They proved an $\Omega(n^{1/2})$ lower bound on the query complexity of this problem, and proposed an elegant tester based on random walk collision counting with complexity\footnote{In this section we hide polynomial dependencies on $\log n$, the degree bound $d_M$ of the graph, and the distance parameter $\epsilon$.
In the rest of the paper, $\tO$ simply hides any poly-logarithmic dependencies.}
\[
\tO(n^{1/2}\Phi^{-2}).
\]
In rough strokes, the algorithm picks a uniformly random node, and counts collisions between~$\tO(n^{1/2})$ independent random walks of length~$\tO(\Phi^{-2})$ all starting from this node.
If the graph is far from being an expander, then the random walk will get stuck in certain low-expansion subsets, leading to an increased number of collisions.
The graph is hence rejected if the number of collisions exceeds some constant.

\begin{figure}[htb]
\centering
\includegraphics[width=.7\textwidth]{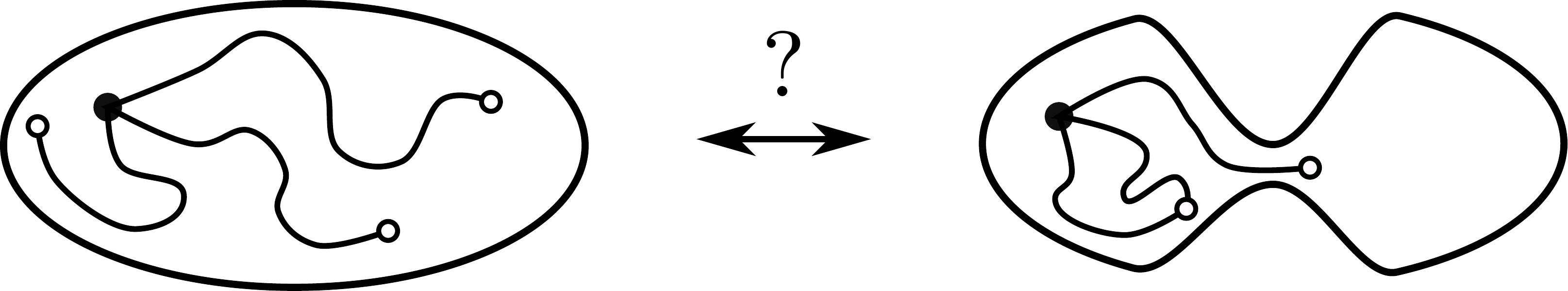}
\caption{The GR tester counts collisions between independent random walks starting from some seed node. Low expansion of the graph results in an increased number of collisions.} \label{fig:GR-tester}
\end{figure}

Goldreich and Ron had to base the correctness of their tester on a certain unproven combinatorial conjecture.
However, in later works by Czumaj and Sohler \cite{czumaj2010testing}, Kale and Seshadhri \cite{kale2011expansion} and Nachmias and Shapira \cite{nachmias2010testing} the correctness was unconditionally established.
The ideas underlying this tester and its analysis were more recently extended towards testing the $k$-clusterability of a graph \cite{czumaj2015testing,chiplunkar2018testing}, which is a multipartite generalization of the expansion testing problem.

In this work we consider the expansion testing problem in the quantum setting, where we allow to perform queries in superposition.
We refer to the nice survey by Montanaro and de Wolf \cite{montanaro2016survey} for a general overview of quantum property testing.
Ambainis, Childs and Liu \cite{ambainis2011quantum} were the first to describe a quantum algorithm for expansion testing.
The gist of their algorithm is to combine an appropriate derandomization of the GR tester with Ambainis' quantum algorithm for element distinctness \cite{ambainis2007quantum}.
The latter allows to count collisions among the set of $\tO(n^{1/2})$ random walk endpoints using only $\tO(n^{1/3})$ quantum queries.
The improved complexity of their quantum expansion tester is
\[
\tO(n^{1/3} \Phi^{-2}).
\]
In addition they proved an $\tOmega(n^{1/4})$ lower bound on the quantum query complexity.

In later work of the current author together with Sarlette \cite{apers2019quantum}, as well as in the current work, a very different approach is taken.
Quantum walks, which form the quantum counterpart of random walks, are used to explore the graph.
Rather than picking random neighbors, a quantum walk explores a graph through quantum queries to its neighborhood.
In particular, this allows to create a ``quantum sample'' that appropriately encodes the random walk distribution.
As we detail in Section \ref{sec:QFF-tester} below, we can then use standard tools from quantum algorithms to estimate the random walk collision probability.
In \cite{apers2019quantum} we introduced a new quantum walk technique called ``quantum fast-forwarding'' (QFF) that allows to approximately prepare these quantum samples in the square root of the random walk runtime.
This yielded a new quantum expansion tester with complexity
\[
\tO(n^{1/2} \Phi^{-1}),
\]
quadratically improving the dependency of the GR tester on $\Phi$, which corresponds to the random walk runtime.
Up to this work, this left the problem of quantum expansion testing with two different testers with a complementary speedup.
In this work, however, we present a new quantum tester which closes this gap.
Essentially we improve the QFF tester from \cite{apers2019quantum} by initially doing some classical work in the graph: from the initial node $v$, we first grow a local node subset or ``seed set'' of size $n^{1/3}$.
In earlier work by the author \cite{apers2019qsampling} it was already shown that such seed sets allow to more efficiently create quantum samples, essentially by improving the projection of the initial state on the final quantum sample.
Indeed, starting from this seed set, rather than directly from $v$, we can run the QFF tester with an improved complexity $\tO(n^{1/3} \Phi^{-1})$.
To prove correctness of the tester, we must ensure that if the initial node $v$ is inside some low-expansion set, then the seed set largely remains inside that set.
Thereto we borrow a so-called ``evolving set process'' from the local graph clustering literature \cite{andersen2016almost}, allowing to grow such a set in complexity $\tO(n^{1/3} \Phi^{-1})$.
This allows to prove our main result:
\begin{theorem}
There exists a quantum expansion tester with complexity $\tO(n^{1/3} \Phi^{-1})$.
\end{theorem}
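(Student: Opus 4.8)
The plan is to follow the Goldreich--Ron paradigm in its quantum form, reducing expansion testing to the estimation of a random walk collision probability, and then to accelerate that estimation by combining quantum fast-forwarding with a locally grown seed set. First I would recall the combinatorial core of the GR analysis, as made rigorous in \cite{czumaj2010testing,kale2011expansion,nachmias2010testing}: writing $M$ for the (lazy, symmetric) walk matrix and $M^t\ket{v}$ for the $\ell_1$-distribution after $t = \tO(\Phi^{-2})$ steps from a node $v$, the collision probability $\|M^t\ket{v}\|^2$ separates the two cases. If $\Phi(G)\ge \Phi$ then the walk mixes and $\|M^t\ket{v}\|^2 \approx 1/n$ for a typical start $v$; if $G$ is $\epsilon$-far from having expansion $\tOmega(\Phi^2)$, then with constant probability over a uniformly random $v$ the walk is trapped in a low-expansion set $\S^*$ and $\|M^t\ket{v}\|^2 \ge (1+\gamma)/n$ for a constant $\gamma>0$. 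Testing then amounts to deciding, for a random start, between these two regimes.

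The quantum ingredient is to estimate this $\ell_2$-norm without drawing $\tO(n)$ classical samples. Starting from the uniform superposition $\ket{\S} = |\S|^{-1/2}\sum_{v\in\S}\ket{v}$ over a seed set $\S$ of size $k$, quantum fast-forwarding from \cite{apers2019quantum} implements in $\tO(\sqrt t)=\tO(\Phi^{-1})$ walk steps a unitary whose ``good'' branch carries amplitude $\|M^t\ket{\S}\|$. The key computation I would carry out is
\[
\|M^t\ket{\S}\|^2 = \frac{1}{k}\sum_{u,v\in\S}\braket{v|M^{2t}|u},
\]
so that if $\S$ sits inside a region on which the walk is near-stationary, of size $s$, then every transition probability is $\approx 1/s$ and the amplitude-squared is boosted to $\approx k/s$, a factor $k$ above the single-node value $\approx 1/s$. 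Concretely, in the expander case the boosted amplitude-squared is $\approx k/n$, and in the far case it is $\ge (1+\gamma)k/n$, a constant multiplicative gap. Amplitude estimation on the good branch therefore distinguishes the cases using $\tO(\sqrt{n/k})$ repetitions, each costing $\tO(\Phi^{-1})$ fast-forwarding steps, for a total of $\tO(\sqrt{n/k}\,\Phi^{-1})$.

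It remains to produce a seed set that is simultaneously large enough to give the boost and local enough to lie inside $\S^*$ in the far case. Here I would invoke the evolving set process of \cite{andersen2016almost}: grown from the initial node $v$, it returns a set $\S$ of the prescribed size $k$ whose volume leakage out of any low-expansion set containing $v$ is controlled, so that with constant probability $\S$ remains essentially inside $\S^*$. Running this process to size $k=n^{1/3}$ costs $\tO(n^{1/3}\Phi^{-1})$ classical work, which balances the estimation cost $\tO(\sqrt{n/k}\,\Phi^{-1}) = \tO(n^{1/3}\Phi^{-1})$ and yields the claimed overall complexity.

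I expect the main obstacle to be precisely the locality guarantee of the seed set, since the whole speedup hinges on the boosted amplitude faithfully reflecting the trapping of the walk. If the evolving set leaked a constant fraction of its volume outside $\S^*$, the amplitudes in the two cases would no longer be separated by a constant factor and the test would fail; controlling this leakage, and matching the size $k$ to the admissible sizes of $\S^*$ in the GR analysis, is the delicate point. The remaining pieces --- propagating the $\ell_2$ approximation error of fast-forwarding through the amplitude estimate, and amplifying the constant success probability over the random start by $\tO(1)$ repetitions --- I regard as routine and would defer.
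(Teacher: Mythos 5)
Your proposal follows essentially the same route as the paper: the Goldreich--Ron reduction to collision-probability estimation, the amplitude boost $\|P^t\ket{\S}\|^2 \approx |\S|/n$ obtained by starting from a seed-set superposition, quantum fast-forwarding plus amplitude estimation at cost $\tO(\sqrt{n/|\S|}\,\Phi^{-1})$, an evolving set process to grow the seed set, and the balance $|\S| = n^{1/3}$. Both the structure and the complexity accounting match Algorithm \ref{alg:qget} and Theorem \ref{thm:QGET}.

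There is, however, a genuine gap at the step you propose to close by simply invoking \cite{andersen2016almost}, and it is exactly where the paper's new technical work sits. The off-the-shelf ESP guarantee (\cite[Lemma 4.3]{andersen2016almost}, Lemma \ref{lem:overlap} in the paper) bounds the leakage of the evolving sets out of a target set $\S^*$ in terms of the probability that the random walk started at $v$ escapes $\S^*$; from a start node in the diffusion core of $\S^*$ it therefore yields a seed set that mostly stays inside $\S^*$. But containment in $\S^*$ is not what the far-case lower bound needs: by Lemma \ref{lem:reject-set}, the boosted norm (the $\sqrt{1+1/256}$ factor) requires the seed distribution to have constant overlap with the \emph{diffusion core} $\A_d$ of $\S^*$, not merely with $\S^*$. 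A seed set lying inside $\S^*$ but concentrated near its boundary consists of nodes whose walks escape quickly (by definition of the core, such nodes escape with probability at least $1/4$ in $O(\phi^{-1})$ steps), so the trapping argument, and with it the constant separation of the two cases, breaks down. To apply the ESP overlap lemma with target $\A_d$ instead of $\S^*$, one must show that the walk escapes $\A_d$ itself slowly from a proportional-volume set of start nodes, i.e., that the diffusion core has its own diffusion core of comparable volume; this is precisely the paper's new Lemma \ref{lem:diff-core}, proved via a hitting-time contradiction argument, and without it your chain of citations does not close. A second, smaller omission: the ESP cannot return a set ``of the prescribed size $k$'', since its stopping rule may fire early upon encountering a set of conductance below the threshold. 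The paper resolves this with a dichotomy (Theorem \ref{thm:ESP-seedset}): with constant probability the output either has volume at least $M$ with $3/4$ of its volume inside $\A_d$, or it is itself a low-conductance set, in which case the tester rejects immediately because that set is a certificate of low expansion. Your algorithm needs this extra reject branch to be correct in both outcomes.
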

\noindent
The resulting speedup combines the quantum speedups of \cite{ambainis2011quantum} and \cite{apers2019quantum}.
To summarize, we gather the different algorithms and approaches in Table \ref{tab:qet}.

\begin{table}
\centering
\def\arraystretch{1.5}%
\begin{tabular}{| m{5.7cm} | m{3cm} | m{5cm} |}
\hline
Goldreich and Ron \cite{goldreich2002property} & $\tO(n^{1/2} \Phi^{-2})$ (conj.) & RW collision counting \\
\hline
Czumaj and Sohler \cite{czumaj2010testing}, \newline
Kale and Seshadhri \cite{kale2011expansion}, \newline
Nachmias and Shapira \cite{nachmias2010testing} & $\tO(n^{1/2} \Phi^{-2})$ & prove GR conjecture \\
\hline
Ambainis, Childs and Liu \cite{ambainis2011quantum} & $\tO(n^{1/3} \Phi^{-2})$ (q) & quantum element distinctness \\
\hline
Apers and Sarlette \cite{apers2019quantum} & $\tO(n^{1/2} \Phi^{-1})$ (q) & QFF \\
\hline
this work & $\tO(n^{1/3} \Phi^{-1})$ (q) & QFF and seed sets \\
\hline
\end{tabular}
\caption{Complexity for expansion testing. (q) denotes quantum complexity.}
\label{tab:qet}
\end{table}

\subsection{QFF Tester} \label{sec:QFF-tester}
Our tester builds on the QFF tester from \cite{apers2019quantum}, hence we describe this tester first.
Let $P$ denote the random walk (RW) transition matrix, and $P^t\ket{v}$ the $t$-step RW probability distribution\footnote{We use the ket-notation $\ket{v}$ to simply denote the indicator vector on node $v$.} starting from a node $v$.
The tester builds on the observation that the squared 2-norm $\|P^t\ket{v}\|^2$ exactly equals the collision probability of a pair of random walks:
\[
\|P^t\ket{v}\|^2
= \sum_{u \in \V} P^t(u,v)^2
= \sum_{u \in \V} \Pr(X_t=u \,|\, X_0=v)^2,
\]
where we let $X_t$ denote the random walk position at time step $t$.
Hence, we can estimate the collision probability between the $t$-step RW endpoints simply by estimating $\|P^t\ket{v}\|^2$.
This we can do rather straightforwardly using quantum algorithms, in particular making use of quantum walks (QWs).
Starting from an initial node $v$ of the graph, a QW allows to generate the quantum sample
\[
\ket{\psi_t}
= P^t\ket{v} + \ket{\Gamma},
\]
which encodes the RW probability distribution $P^t\ket{v}$ as one of its component, with $\ket{\Gamma}$ some auxiliary garbage component that is orthogonal to the RW component, and which we will not care about.
Given the ability to generate such quantum samples, we can then use a standard quantum routine called \textit{quantum amplitude estimation} to estimate the norm $\|P^t\ket{v}\|^2$ of the RW component.
Now, similarly to the GR tester, if we set $t \in O(\Phi^{-2})$ then we can reject the graph if $\|P^t\ket{v}\|^2$, and hence the collision probability, is larger than some threshold.

The amplitude estimation routine requires $\tO(\|P^t\ket{v}\|^{-1})$ quantum samples, so that the complexity of this quantum expansion tester is $\tO(\|P^t\ket{v}\|^{-1} \QS_t)$, where $\QS_t$ denotes the quantum complexity of creating the quantum sample $\ket{\psi_t}$.
If the graph is regular\footnote{Later on we adapt the graph to ensure this.} then $P$ has a uniform stationary distribution, i.e., the vector $\ket{\pi} = n^{-1/2} \sum_{u\in\V} \ket{u}$ is the unique eigenvalue-1 eigenvector of $P$.
This allows to bound
\begin{equation} \label{eq:norm-bnd}
\|P^t\ket{v}\|
\geq |\braket{\pi|v}|
= n^{-1/2},
\end{equation}
so that $\tO(\|P^t\ket{v}\|^{-1} \QS_t) \in \tO(n^{1/2} \QS_t)$.
In order to bound $\QS_t$, we can use an existing QW approach by Watrous \cite{watrous2001quantum} which gives $\QS_t \in O(t)$.
Since we choose $t \in \tO(\Phi^{-2})$, this yields a complexity $\tO(n^{1/2} \Phi^{-2})$, thus giving no speedup with respect to the GR tester.
In \cite{apers2019quantum} however, we introduced a more involved QW technique called \textit{quantum fast-forwarding} (QFF).
Building on a Chebyshev truncation of the $P^t$ operator, this technique allows to quadratically improve the complexity to $\QS_t \in O(t^{1/2})$, resulting in a complexity
\[
\tO(n^{1/2} \Phi^{-1}).
\]
Given that the GR tester has complexity $\tO(n^{1/2} \Phi^{-2})$, this yields a complementary speedup to the $\tO(n^{1/3} \Phi^{-2})$ expansion tester in \cite{ambainis2011quantum}.
Whereas their speedup follows from a quantum routine for accelerating the collision counting procedure, the speedup in the QFF tester follows from accelerating the random walk runtime.

\subsection{QFF Tester with Seed Sets}
In this paper we refine the QFF tester, improving its complexity to $\tO(n^{1/3} \Phi^{-1})$.
We improve its suboptimal $n^{1/2}$-dependency by initially constructing or ``growing'' a seed set around the initial node, from which we then run the QFF tester.
This idea is derived from earlier work of the author \cite{apers2019qsampling}, where seed sets are used to create a superposition over the edges of a graph, leading to a similar speedup.
The main insight is derived from the bound in \eqref{eq:norm-bnd}, showing that the suboptimal $n^{1/2}$-dependency stems from a small projection of the initial state $\ket{v}$ onto the uniform superposition~$\ket{\pi}$.
Growing a seed set allows to improve this dependency: if we grow a set $\S \subseteq \V$ from $v$, and we use the quantum superposition $\ket{\S} = |\S|^{-1/2} \sum_{u\in\S} \ket{u}$ as an initial state, this bound becomes
\[
\|P^t\ket{S}\|
\geq |\braket{\pi|\S}|
= |\S|^{1/2} n^{-1/2}.
\]
This suggests the following new tester: (i) pick a uniformly random node $v$, (ii) grow a seed set $\S$ from $v$ of appropriate size, and (iii) create $\tO(n^{1/2}|\S|^{-1/2})$ QW samples $\ket{\psi_t} = P^t\ket{\S} + \ket{\Gamma}$, allowing to estimate $\|P^t\ket{\S}\|$.
Assuming that the construction of $\S$ requires $|\S|$ queries, and momentarily ignoring the $\Phi$-dependency, this tester has a combined complexity of $\tO(|\S| + \|P^t\ket{S}\|^{-1}) \in \tO(|\S| + n^{1/2}|\S|^{-1/2})$.
If we choose $|\S| = n^{1/3}$, this becomes $\tO(n^{1/3})$ as we aimed for.

Using a similar reasoning as before, we again wish to reject the graph if our estimate is larger than some threshold.
Indeed, as depicted in Figure \ref{fig:QET}, if the seed set is localized in some low-expansion set, then the 2-norm $\|P^t\ket{S}\|$ will be larger than when the graph has no low-expansion sets.
The difficulty however is to ensure that the seed set $\S$, when grown from some initial node $v$ in a low-expansion set, effectively remains inside that set.
If this is not the case, then a RW from $\S$ will no longer be stuck in the low-expansion set, thus no longer giving rise to an increased 2-norm.
As a consequence, we cannot simply use a breadth-first search from $v$, as we did in \cite{apers2019qsampling}: a BFS might exit a low-expansion set more easily than a random walk.
Luckily, however, the problem of locally exploring a low-expansion set (or ``cluster'') turns out to be well-studied under the name ``local graph clustering'' \cite{spielman2013local,andersen2006local,andersen2016almost,orecchia2012approximating}.

\begin{figure}[htb]
\centering
\includegraphics[width=.8\textwidth]{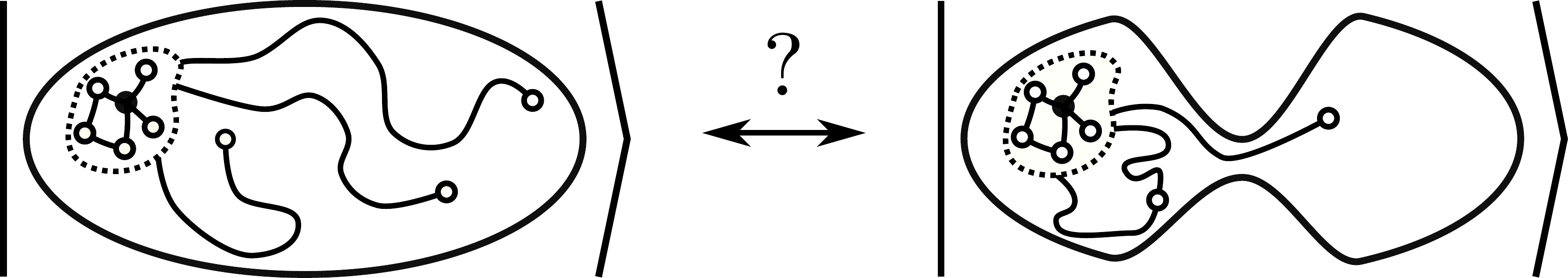}
\caption{The new tester classically grows an appropriately local seed set around the initial node.
From this set a quantum sample can be generated more efficiently.
We use an evolving set process to ensure that the seed set mostly remains inside the initial low-expansion set.} \label{fig:QET}
\end{figure}

In particular, we can use a so-called ``evolving set process'' (ESP) as used by Andersen, Oveis-Gharan, Peres and Trevisan in \cite{andersen2016almost}.
An ESP is a Markov chain on subsets of the nodes, which evolves by expanding or contracting its boundary based on the RW behavior on the graph.
Given an initial node $v$ inside a low-expansion set, they simulate an ESP to explicitly retrieve this set.
Since we are interested in growing a potentially much smaller seed set $\S$ inside the cluster, we slightly adapt their algorithm, leading to the following result.
The algorithm either returns a low-expansion set, allowing to immediately reject the graph, or it returns an appropriate seed set.
\begin{proposition}
Fix a parameter $M \geq 0$.
Given a random node $v$ from a set $\S'$ of expansion $\tO(\Phi^2)$, we can use an ESP to return a set $\S$ such that with constant probability either $\Phi(\S) < \Phi$ or $|\S| \geq M$ and $|\S \cap \S'|/|\S| \in \Omega(1)$.
The complexity of generating this set is $\tO(M \Phi^{-1})$.
\end{proposition}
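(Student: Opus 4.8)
The plan is to adapt the evolving-set process (ESP) of \cite{andersen2016almost} and to exploit two of its standard properties. Recall the ESP: from a current set $S$ one draws a threshold $U$ uniformly in $[0,1]$ and moves to $S' = \{y \in \V : P(y,S) \geq U\}$, where $P(y,S) = \sum_{x\in S}P(y,x)$. Starting from $S_0 = \{v\}$, the two properties I will use are (i) the volume martingale $\Exp[\mathrm{vol}(S_{t+1}) \mid S_t] = \mathrm{vol}(S_t)$, which for a $d$-regular graph reads $\Exp[|S_{t+1}|\mid S_t] = |S_t|$, and (ii) the Diaconis--Fill / Morris--Peres coupling with the random walk, $P^t(v,y)/\pi(y) = \Exp[\mathbf{1}[y\in S_t]/\pi(S_t)]$. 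One ESP step is locally computable by exploring the one-step neighborhoods of $S_t$ and sorting the resulting transition probabilities, at a cost of $\tO(\mathrm{vol}(S_t))$ queries. I would run the process for a horizon of $T = \tO(\Phi^{-1})$ steps, at every step testing the conductance (equivalently, up to the degree bound, the vertex expansion) of $S_t$, and capping the volume at $\tO(M)$.

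The containment guarantee is the clean part and follows from the coupling (ii). Summing the identity over $y \in \S'$ and using that the graph is regular gives $\Exp[|S_t\cap\S'|/|S_t|] = \Pr[X_t\in\S' \mid X_0 = v]$, i.e. the expected fraction of the ESP set lying in $\S'$ equals the probability that a length-$t$ walk from $v$ has not left $\S'$. Averaging over a uniformly random $v\in\S'$ and using that $\S'$ has expansion $\tO(\Phi^2)$, the per-step escape probability is $\tO(\Phi^2)$, so over $T=\tO(\Phi^{-1})$ steps the walk stays in $\S'$ with probability $1 - \tO(\Phi)$, hence $\Omega(1)$. Since then $\Exp[1 - |S_t\cap\S'|/|S_t|] \le \tO(\Phi)$ is small, a Markov inequality on the complementary fraction yields $|S_t\cap\S'|/|S_t| \in \Omega(1)$ with constant probability, which is exactly the required locality of the seed set.

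For the dichotomy I would invoke the Lov\'asz--Simonovits curve / level-set analysis underlying the ESP. If, along the trajectory, some $S_t$ has expansion below $\Phi$, I return it and we are in the first case. Otherwise every $S_t$ has expansion at least $\Phi$; combined with the fact that a random start $v\in\S'$ keeps the walk from mixing out of $\S'$ for the chosen horizon, the level-set argument should force the process, with constant probability, to grow a contained set whose volume reaches the cap $M$, putting us in the second case. The complexity is then immediate: $T=\tO(\Phi^{-1})$ ESP steps, each costing $\tO(\mathrm{vol}(S_t)) = \tO(M)$ queries, give total cost $\tO(M\Phi^{-1})$.

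The hard part is the size guarantee, because it is in tension with property (i): since $\mathrm{vol}(S_t)$ is a nonnegative martingale started at a single vertex, Doob's inequality gives $\Pr[\max_{t\le T}|S_t| \ge M] \le 1/M$, so a naive single run reaches size $M$ only with probability $\sim 1/M$, not constant. The crux is therefore to show that the low expansion of $\S'$ replaces this pessimistic martingale bound by reliable growth: conditioned on not exposing a sub-$\Phi$ cut, the process is effectively confined to $\S'$ and its conductance profile should drive the volume up to $M$. Making this precise — choosing the horizon and the stopping / volume-capping rule so that the two branches are genuinely exhaustive with constant probability, while simultaneously preserving the containment of the previous paragraph and keeping the accumulated work at $\tO(M\Phi^{-1})$ — is where the careful adaptation of the analysis of \cite{andersen2016almost} is needed, and I expect it to be the main obstacle. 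If a single run does not suffice, a fallback is to amplify over $\tO(M)$ independent restarts and to bound the expected per-run work by the accumulated volume $\sum_t \mathrm{vol}(S_t)$, so that the total cost still remains $\tO(M\Phi^{-1})$.
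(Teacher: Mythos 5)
Your containment argument (the Diaconis--Fill identity plus Markov) is sound in spirit and plays the same role as Lemma~\ref{lem:overlap} does in the paper, though two repairs are needed: the returned set comes at a \emph{random} stopping time, so you need the $\min_{t\leq T}$ form of the overlap bound rather than a fixed-$t$ identity, and for the downstream tester the paper actually needs overlap with the \emph{diffusion core} of $\S'$ (via Lemma~\ref{lem:diff-core}), not with $\S'$ itself. The genuine gap, however, is exactly the part you flag as the main obstacle: the dichotomy. With your parameters it is not merely unproven but false. Take $\S'$ to be a cycle of length $L$ weakly attached to the rest of the graph, so $\phi(\S') = \Theta(1/L)$ and the promise forces $\Phi = \Theta(L^{-1/2})$. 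The ESP states inside $\S'$ are arcs, and an arc of length $\ell$ has vertex expansion $2/\ell$, so triggering the first branch requires $\ell \gtrsim \Phi^{-1} = \sqrt{L}$. But the arc endpoints move diffusively (the volume-biasing drift is only $O(1/\ell)$), so after $t$ steps the arc has length $\Theta(\sqrt{t})$; at your horizon $t = \tO(\Phi^{-1})$ its length is only about $\Phi^{-1/2}$, whose expansion $\approx \Phi^{1/2}$ is far \emph{above} $\Phi$ and whose volume is far below any interesting $M$ (e.g.\ $M = n^{1/3}$). So with probability $1-o(1)$ neither branch occurs. This also refutes the hope that low expansion of $\S'$ "drives reliable growth": growth is diffusive, and the ESP genuinely needs $\Theta(\Phi^{-2})$ steps to expose a cut of conductance $\Phi$, which is precisely the content of the Andersen--Peres bound (Lemma~\ref{lem:cond}, conductance $\tO(\sqrt{\log m / T})$ after $T$ steps). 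Your fallback of $\tO(M)$ restarts does not help, since each restart fails for the same reason.

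The missing idea --- how the paper reconciles the necessary horizon $T = \tO(\Phi^{-2}\log m)$ with the budget $\tO(M\Phi^{-1})$ --- is to cap the accumulated simulation \emph{cost} rather than the number of steps or the volume. The paper's stopping rule $\tau(T,B,\theta)$ halts when $\phi(\S_t) \leq \theta \approx \Phi$, or at time $T = \tO(\Phi^{-2}\log m)$, or when $\cost_t$ exceeds $B = \tO(M\sqrt{T\log m}) = \tO(M\Phi^{-1})$. Then: (a) by Lemma~\ref{lem:cond} the event $\tau = T$ has probability $O(\alpha^{-1})$, so the process almost always stops for one of the other two reasons; (b) by the cost-to-volume bound of Lemma~\ref{lem:size}, $\Exp[\cost_\tau / d(\S_\tau)] \leq 1 + 4\sqrt{T\log m}$, so conditioned on blowing the budget one gets $d(\S_\tau) \gtrsim B/\sqrt{T\log m} \geq M$ with high probability --- the "cost exceeded" branch is thereby converted into the "large set" branch instead of being wasted work; and (c) by Theorem~\ref{thm:ESP-cost} the simulation complexity is $O(B\log m) = \tO(M\Phi^{-1})$ no matter which branch occurs. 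This interplay between the cost-capped stopping rule, the cost-to-volume martingale bound, and the conductance bound at horizon $\Phi^{-2}$ is the core of the paper's proof of Theorem~\ref{thm:ESP-seedset}, and it is absent from your proposal; without it, neither the dichotomy nor the claimed complexity can be established.
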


Building on this tool, we can now sketch our new quantum tester, summarized in Algorithm~\ref{alg:qget-intro}.
Since the ESP process requires $\tO(n^{1/3} \Phi^{-1})$ steps by the above proposition, and estimating $\|P^t\ket{\S}\|$ requires $\tO(\|P^t\ket{\S}\| \Phi^{-1}) \in \tO(n^{1/3} \Phi^{-1})$ steps, we retrieve the promised tester complexity $\tO(n^{1/3} \Phi^{-1})$.

\begin{algorithm}
\caption{Quantum Expansion Tester} \label{alg:qget-intro}
\normalsize
\begin{algorithmic}[1]
\State
select a uniformly random starting node $v$
\State
grow a seed set $\S$ from $v$ using an ESP
\State
\textbf{if} $\Phi(\S) < \Phi$ \textbf{then} \texttt{reject}
\State
use quantum amplitude estimation to estimate $\| P^t \ket{\S} \|$ for $t \in \tO(\Phi^{-1})$
\State
\textbf{if} $\| P^t \ket{\S} \|$ too large \textbf{then} \texttt{reject} \textbf{else} \texttt{accept}
\end{algorithmic}
\end{algorithm}

\subsection{Open Questions}
We finish this section by discussing some open questions related to this work.
\begin{itemize}
\item
In \cite{apers2019qsampling} a breadth-first search is used to grow a seed set $\S$, requiring a number of steps $\tO(|\S|)$.
In the current work we use a more refined ESP algorithm to grow $\S$, which in particular ensures that the set remains inside some low-expansion subset (say with expansion at most $\Phi$).
This procedure however requires an increased number of steps $\tO(|\S|\Phi^{-1})$.
We leave it as an open question whether such an appropriate set can be grown in $\tO(|\S|)$ steps.
The complexity of the tester then becomes $\tO(|\S| + n^{1/2}|\S|^{-1/2}\Phi^{-1})$.
Setting $|\S| = n^{1/3}\Phi^{-2/3}$ this would lead to an improved complexity $\tO(n^{1/3}\Phi^{-2/3})$.
\item
The use of an ESP for the expansion testing problem could also be useful for improving the $\Phi$-dependency of the classical GR tester.
If we could for instance grow a pair of seed sets, both of size $n^{1/2}$, that behave to some extent as ``random subsets'' of a local cluster, then we could simply count collisions between these sets, thus avoiding the use of random walks.
A higher number of collisions would then again signal a low expansion of the graph.
Ideally an ESP-like procedure would allow to grow these sets in $\tO(n^{1/2}\Phi^{-1})$ steps, improving on the $\tO(n^{1/2}\Phi^{-2})$ complexity of the GR tester.
\item
Clusterability testing, as recently studied in \cite{czumaj2015testing,chiplunkar2018testing}, uses very similar techniques to the GR expansion tester.
It seems feasible that we can use the techniques from this paper to similarly improve on these testers.
\item
Goldreich and Ron \cite{goldreich2002property} proved a classical lower bound $\Omega(n^{1/2})$ for expansion testing, suggesting that their tester has an optimal dependency on $n$.
In the quantum setting however, the only known lower bound is $\Omega(n^{1/4})$ as proven by Ambainis, Childs and Liu \cite{ambainis2011quantum}, thus leaving a large gap to all current quantum testers, which have a $\tO(n^{1/3})$-dependency.
While our work does not provide any new insights towards closing this gap, we do feel that this is an interesting question to resolve.
\end{itemize}

\section{Preliminaries} \label{sec:prelim}

In this section we formalize the computational model, the query model and the definition of an expansion tester.
We also describe some necessary random walk properties, and define the notion of a quantum walk.

\subsection{Complexity, Computational Model and QRAM} \label{sec:comp-model}

The \textit{quantum query complexity} of an algorithm simply denotes the number of quantum queries that the algorithm makes to the input (see Section \ref{sec:query-model} for details).

In contrast, the actual runtime or \textit{complexity} of the algorithm is defined with respect to a computational model.
We specify our computation model as a system that
\begin{enumerate}
\item
can run quantum routines on $O(\log n)$ qubits, where $n$ is the number of nodes in the input graph,
\item
can make quantum queries to the input (see Section \ref{sec:query-model}), and
\item
has access to a quantum-read/classical-write RAM (QRAM) of $\tO(n^{1/3})$ classical bits.
A single QRAM operation corresponds to either (i) classically writing a bit to the QRAM, or (ii) making a quantum query to bits stored in the QRAM.
\end{enumerate}
By the \textit{complexity} of an algorithm we denote the total number of (i) queries, (ii) elementary classical and quantum operations (gates), and (iii) QRAM operations.
By definition, the complexity of an algorithm forms an upper bound on the query complexity of an algorithm, irrespective of the computational model or QRAM assumptions.

Let $\S \subset \V$ be a subset of nodes.
As in \cite{apers2019qsampling}, we can use the QRAM to efficiently create, and reflect around, the quantum state $\ket{\S} = |\S|^{-1/2} \sum_{v \in \S} \ket{v}$.
To this end, we rely on the following theorem by Kerenidis and Prakash \cite{kerenidis2016quantum}.
\begin{theorem}[{\cite[Theorem 15]{kerenidis2016quantum}}]
Assume that we are given a subset $\S \subset \V$.
With a one-off preprocessing cost of $\tO(|\S|)$ QRAM operations, we can repeatedly (i) create the quantum state $\ket{\S}$, and (ii) reflect around the quantum state $\ket{\S}$, both using $\tO(1)$ QRAM operations per repetition.
\end{theorem}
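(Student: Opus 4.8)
The plan is to reduce both requested operations to a single unitary $U$ satisfying $U\ket{0} = \ket{\S}$ that uses only $\tO(1)$ QRAM operations per invocation, after a $\tO(|\S|)$ preprocessing phase. Task (i), creating $\ket{\S}$, is then just one application of $U$ to the all-zero register. Task (ii), reflecting around $\ket{\S}$, follows from the standard identity $2\ket{\S}\bra{\S} - \Id = U\,(2\ket{0}\bra{0}-\Id)\,U^{\dagger}$: since the reflection $2\ket{0}\bra{0}-\Id$ around the computational-basis state $\ket{0}$ is a multiply-controlled phase on $O(\log n)$ qubits that needs no QRAM, the cost of reflecting around $\ket{\S}$ is dominated by one application each of $U$ and $U^{\dagger}$, i.e. $\tO(1)$ QRAM operations. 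It thus suffices to build $U$.

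For the preprocessing I would store in QRAM the Kerenidis--Prakash binary tree over the elements of $\S$. Writing $s = |\S|$ and fixing an arbitrary ordering $v_0,\dots,v_{s-1}$ of $\S$, I place these elements at the leaves of a balanced binary tree of depth $\lceil \log_2 s\rceil = O(\log n)$, record at every internal node the number of leaves in its subtree, and record at each leaf the node label $v_i \in \V$. I would additionally store the inverse lookup table $v_i \mapsto i$, i.e. the leaf address of each element. The tree has $O(s)$ nodes and each entry is a single classical write, so the whole structure is laid down in $\tO(s) = \tO(|\S|)$ QRAM operations.

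To implement $U$ I would descend the tree from root to leaves using one controlled rotation per level. At level $\ell$, controlled on the partial address already written to an $O(\log n)$-qubit address register, I query the stored leaf-counts $n_{L}, n_{R}$ of the two children of the current node and apply to the next address qubit the single-qubit rotation with $\cos^2\theta = n_{L}/(n_L+n_R)$. Because the counts are exact integers, the telescoping product of these conditional probabilities assigns each leaf $i$ the amplitude $s^{-1/2}$, so after $O(\log n)$ levels the register holds $s^{-1/2}\sum_{i} \ket{i}$ over the leaf addresses. I then query the leaf-to-label table to obtain $s^{-1/2}\sum_i \ket{i}\ket{v_i}$, and uncompute the address register by querying the inverse table, reversibly mapping $\ket{i}\ket{v_i}\mapsto\ket{0}\ket{v_i}$; this leaves exactly $\ket{\S} = s^{-1/2}\sum_{v\in\S}\ket{v}$. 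Each of the $O(\log n)$ levels and each lookup uses $\tO(1)$ QRAM queries, so $U$ costs $\tO(1)$ QRAM operations in total.

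The main subtlety, and the step I would treat most carefully, is the clean uncomputation together with finite precision. The uncomputation is what prevents the auxiliary address register from remaining entangled with $\ket{\S}$ and producing garbage; this is precisely why both the forward leaf-to-label table and the inverse label-to-address table are stored, so that the map $\ket{i}\ket{v_i}\mapsto\ket{0}\ket{v_i}$ is reversible. Separately, the rotation angles $\theta = \arccos\sqrt{n_L/(n_L+n_R)}$ are generally irrational, so $U$ is realized only up to an error that decays polynomially once each angle is computed to $O(\log n)$ bits of precision; summing these contributions over the $O(\log n)$ levels gives a total preparation error that is negligible and absorbed into the $\tO$ notation, which is the sense in which $\ket{\S}$ is created and reflected around.
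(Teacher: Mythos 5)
Your proposal is correct and follows essentially the same route as the cited Kerenidis--Prakash proof (the paper itself only imports this theorem by citation): a QRAM-stored binary tree of subtree counts, state preparation by $O(\log n)$ conditional rotations descending the tree, and the reflection obtained as $U(2\ket{0}\bra{0}-\Id)U^{\dagger}$. Your only deviation is cosmetic---you use a depth-$\lceil\log_2|\S|\rceil$ tree over the elements of $\S$ together with a leaf-to-label table and its inverse for clean uncomputation, whereas Kerenidis--Prakash index the tree by the node labels themselves (depth $\lceil\log_2 n\rceil$, storing only the $O(|\S|\log n)$ nodes on paths to nonzero leaves), which avoids the relabeling step; both yield the stated $\tO(|\S|)$ preprocessing and $\tO(1)$ per-use costs, and you correctly flag the two genuine subtleties (reversibility of the uncomputation via injectivity of the labeling, and finite-precision rotation angles).
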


We make two final comments on our QRAM assumption, as it is often subject of debate.
First, such an assumption is native to any quantum algorithm that makes use of quantum-accessible classical input or memory.
We note that the actual qubits that a QRAM query acts on (i.e., the superposition of queried addresses plus the target register) is only logarithmic in the number of stored bits.
In that sense our memory requirement is weaker than for instance that of Ambainis’s element distinctness algorithm \cite{ambainis2007quantum} and its application in the quantum expansion tester of \cite{ambainis2011quantum}, which effectively require a polynomial-sized memory of qubits.
Second, as already pointed out above, the complexity is an upper bound on the query complexity, irrespective of the computational model or QRAM assumptions.
In that sense, our work improves on the query complexity of the former works \cite{ambainis2011quantum,apers2019quantum}, also without these assumptions.

\subsection{Query Model and Property Testing} \label{sec:query-model}
We are given query access to some undirected graph $G = (\V,\E)$, with node set $\V = [n]$ and edge set $\E$.
We denote $|\E| = m$.
For any $v \in \V$, we let $d(v)$ denote the degree of $v$, the maximum degree $d_M = \max_{v\in\V} d(v)$, and $d(\S) = \sum_{v\in\S} d(v)$ denotes the total degree of a set $\S \subseteq \V$.
We say that $G$ has degree bound $d$ if $d_M \leq d$.
In the context of property testing of bounded-degree graphs \cite{goldreich2002property}, the following queries are allowed:
\begin{itemize}
\item
\textit{uniform node query}: return uniformly random node $v \in \V$
\item
\textit{degree query}: given $v \in \V$, return degree $d(v)$
\item
\textit{neighbor query:} given $v \in \V$, $k \in [d(v)]$, return $k$-th neighbor of $v$
\end{itemize}
Throughout the paper we will assume that $G$ is regular.
If this is not the case, then we can always modify the graph to ensure this: to any node $i$ with degree $d(i)<d$ we add $d-d(i)$ parallel self loops.
This effectively renders the graph regular, ensuring that a random walk converges to the uniform distribution, as we will require later.
Notably this does not change the expansion of the graph.
Goldreich and Ron \cite{goldreich2011testing} achieve the same effect by modifying the random walk rather than the graph, but modifying the graph will prove more elegant for our purpose.

Since we wish to study quantum algorithms, we will allow to perform degree and neighbor queries in superposition.
To illustrate this, assume that a neighbor query, given $v\in\V$ and $k\in[d(v)]$, returns a node $u$.
Using quantum notation, this is described as a unitary transformation
\[
\ket{v}\ket{k}\ket{x}
\mapsto \ket{v}\ket{k}\ket{x+u},
\]
where $x$ is some arbitrary $\lceil\log n\rceil$-bit string, and ``$\displaystyle +$'' denotes addition modulo $\lceil\log n\rceil$.
We can now imagine the first register being in a superposition $d(v)^{-1/2} \sum_{k\in[d(v)]} \ket{v}\ket{k}\ket{x}$, so that the query operation now becomes
\begin{equation} \label{eq:quantum-query}
\frac{1}{\sqrt{d(v)}} \sum_{i\in[d(v)]} \ket{v}\ket{k}\ket{x}
\mapsto \frac{1}{\sqrt{d(v)}} \sum_{i\in[d(v)]} \ket{v}\ket{k}\ket{x+u^{(k)}},
\end{equation}
where we let $u^{(k)}$ denote the $k$-th neighbor of $v$.
We will call a single such query a ``quantum query''.
We refer the interested reader to the survey by Montanaro and de Wolf \cite{montanaro2016survey} for more details on the quantum query model.

We will follow the property testing model for bounded-degree graphs by Goldreich and Ron \cite{goldreich2002property}.
Given two $n$-node graphs $G=([n],\E)$ and $G'=([n],\E')$ with degree bound $d$, they define the relative distance between $G$ and $G'$ as the number of edges that needs to be added or removed to turn $G$ into $G'$, divided by the maximum number of edges $nd$.
This is equal to $|\E \triangle \E'|/(nd)$, with $\triangle$ the symmetric difference between $\E$ and $\E'$.
$G$ is then said to be $\epsilon$-far from $G'$ if $|\E \triangle \E'|/(nd) \geq \epsilon$.
When studying a certain property $P$ of graphs, $G$ is said to be ``$\epsilon$-far from having property $P$'' if $G$ is $\epsilon$-far from any graph $G'$ having property $P$.

\subsection{Expansion Testing}
We define the (vertex) expansion of a subset $\S \subset \V$ as $\Phi(\S) = |\partial \S|/|\S|$.
Here $\partial \S = \{u \in \S^c \mid \exists v\in\S \;\mathrm{s.t.}\; (u,v)\in\E\}$ is the set of nodes in $\S^c$ that have an edge going to $\S$.
The expansion of a graph $G$ is then defined as
\[
\Phi(G)
= \min_{\S \subset \V: |\S|\leq n/2} \Phi(\S).
\]
We consider the following definition of an expansion tester due to Czumaj and Sohler \cite{czumaj2010testing}.
\begin{definition}
An algorithm is a $(\Phi,\epsilon)$-expansion tester if there exists a constant $c>0$, possibly dependent on $d$, such that given parameters $n$, $d$, and query access to an $n$-node graph with degree bound $d$ it holds that
\begin{itemize}
\item
if the graph has expansion at least $\Phi$, then the algorithm outputs ``{\normalfont\texttt{accept}}'' with probability at least $2/3$,
\item
if the graph is $\epsilon$-far from any graph having expansion at least $c \Phi^2 \log^{-1}(dn)$, then the algorithm outputs ``{\normalfont\texttt{reject}}'' with probability at least $2/3$.
\end{itemize}
\end{definition}
\noindent
We note that this is a slightly more constrained definition than the one in e.g.~\cite{kale2011expansion,ambainis2011quantum,apers2019quantum}.
In these works the $\log$-factor in the reject case is actually left as an additional free parameter $\mu$, which is compensated in the runtime.
While our algorithm might also work in that more general setting, we believe that the corresponding technicalities would go beyond the scope and main ideas of this paper, and we leave it as a minor open question.
We also mention that in the traditional setting of property testing, the expression ``$c \Phi^2 \log^{-1}(dn)$'' in the second bullet should be replaced by ``$\Phi$''.
Although unproven, the relaxation in this definition seems necessary to allow for efficient (sublinear) testing using random walks.
This is a consequence of the fact that the expansion only characterizes the random walk mixing behavior up to a quadratic factor.
We stress that this quadratic gap is present in all works on expansion testing.

Apart from the vertex expansion, we also define the conductance.
When studying random walks, this is often a slightly more appropriate measure.
For a subset $\S \subset \V$ it is defined as $\phi(\S) = |\E(\S,\S^c)|/d(\S)$, where $\E(\S,\S^c) = \{(u,v)\in\E \mid u\in\S,\, v\in\S^c\}$ denotes the set of edges between $\S$ to $\S^c$.
The conductance of a graph $G$ with $m$ edges is then defined as $\phi(G) = \min_{\S \subset \V: d(\S)\leq m/2} \phi(\S)$.
If $G$ is $d$-regular, as we will assume throughout the paper, this simplifies to $\phi(G) = \min_{\S \subset \V: |\S|\leq n/2} |\E(\S,\S^c)|/(d|\S|)$.
Since $|\partial\S| \leq |\E(\S,\S^c)| \leq d|\partial\S|$, this allows to relate vertex expansion and conductance as follows:
\begin{equation} \label{eq:bnd-cond}
\Phi(\S)/d
\leq \phi(\S)
\leq \Phi(\S).
\end{equation}

\subsection{Random Walks}
We will consider lazy random walks (RWs), described by a Markov chain on the node set.
From any node the RW jumps with probability $1/2$ to any of its neighbors uniformly at random, and otherwise stands still.
If we let $P(u,v)$ denote the RW transition probability from node $v$ to node $u$, then $P(u,v) = 1/(2d(v))$ for $(v,u) \in E$, $P(u,v) = 1/2$ if $u = v$ and $P(u,v) = 0$ elsewhere.
If the underlying graph is connected, then the RW converges to a unique limit distribution in which every node has a probability proportional to its degree.
On a regular graph, this corresponds to a uniform distribution.

\subsubsection{Diffusion Core} \label{sec:diff-core}

Central to the study of expansion testers is the so-called ``diffusion core'' of a set $\S \subseteq \V$.
The diffusion core allows to lower bound the probability that a RW of given length stays entirely inside $\S$, as a function of its conductance $\phi(\S)$.
Let $\tau_v(\S^c)$ denote the escape time of $\S$ from $v$, i.e., the hitting time of a RW from $v\in\S$ to the complement $\S^c$.
We then define the diffusion core of $\S$ as follows:
\begin{definition} \label{def:diff-core}
For $\alpha,\beta>0$, the $(\alpha,\beta)$-diffusion core of $\S$ is defined as
\[
\S_{\alpha,\beta}
= \big\{ v \in \S \,\big\vert\, 
	\Pr(\tau_v(\S^c) > \alpha \phi(\S)^{-1}) \geq \beta \big\}.
\]
\end{definition}
\noindent
Throughout we define the ``canonical'' diffusion core $\S_d = \S_{1/40,3/4}$.
Using a reasoning similar to Spielman and Teng \cite{spielman2013local}, we can lower bound the size of the diffusion core.
\begin{lemma} \label{lem:diff-size}
\[
\frac{d(\S_{\alpha,\beta})}{d(\S)}
> 1 - \frac{\alpha}{2(1-\beta)}.
\]
\end{lemma}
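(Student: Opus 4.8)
```latex
\textbf{Proof proposal.}
The plan is to bound the complementary set $\S \setminus \S_{\alpha,\beta}$ and show that its total degree is small. By definition, a node $v \in \S \setminus \S_{\alpha,\beta}$ is one for which the escape probability is large, i.e.
\[
\Pr\big(\tau_v(\S^c) > \alpha \phi(\S)^{-1}\big) < \beta,
\]
which is equivalent to saying the RW started at $v$ escapes $\S$ within $\alpha\phi(\S)^{-1}$ steps with probability at least $1-\beta$. The key quantity I would control is the expected number of edges crossing the boundary that a RW traverses. First I would consider starting a RW from the stationary distribution restricted to $\S$ (i.e., proportional to degree on $\S$), run it for $T = \alpha\phi(\S)^{-1}$ steps, and count the expected number of boundary crossings into $\S^c$. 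Since each step of the stationary RW crosses the cut $(\S,\S^c)$ with probability exactly $\phi(\S)$ (by the definition of conductance and the reversibility/stationarity of the lazy walk on a regular graph), the expected total number of escapes over $T$ steps from a degree-weighted random start in $\S$ is at most $T \phi(\S) = \alpha$.

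The central step is then a Markov-type / union-bound argument. For each node $v$ in the ``bad'' set $\S \setminus \S_{\alpha,\beta}$, the probability of escaping within $T$ steps is at least $1-\beta$, so such nodes contribute substantially to the expected escape count. Concretely, weighting by degree and summing the escape probabilities over all nodes in $\S$, I would write
\[
\sum_{v \in \S} \frac{d(v)}{d(\S)} \Pr\big(\tau_v(\S^c) \le T\big)
\ge \frac{d(\S \setminus \S_{\alpha,\beta})}{d(\S)}\,(1-\beta).
\]
On the other hand, this same expectation (the probability of escaping $\S$ at least once within $T$ steps from a stationary start) is upper bounded by the expected number of crossings, which is at most $T\phi(\S) = \alpha$ — but with the laziness and the factor of $\tfrac12$ on each step, a careful accounting gives the extra factor of $\tfrac12$, yielding an upper bound of $\alpha/2$. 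Combining the two inequalities gives
\[
\frac{d(\S \setminus \S_{\alpha,\beta})}{d(\S)}(1-\beta) \le \frac{\alpha}{2},
\]
and rearranging produces
\[
\frac{d(\S_{\alpha,\beta})}{d(\S)} = 1 - \frac{d(\S \setminus \S_{\alpha,\beta})}{d(\S)} > 1 - \frac{\alpha}{2(1-\beta)},
\]
as claimed.

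The main obstacle I anticipate is making the ``expected crossings'' bound precise, in particular getting the constant $\tfrac12$ exactly right and handling the subtlety that escaping-at-least-once is bounded by the expected number of crossings rather than equalling it. The clean statement — that a stationary lazy RW crosses the cut with per-step probability $\tfrac12 \cdot \phi(\S) \cdot 2 = \phi(\S)$ (where laziness halves the effective crossing rate but the conductance is defined via edge mass) — requires carefully tracking how the laziness parameter and the degree-weighting interact, and verifying that the hitting-time event is dominated by the first-crossing indicator summed over steps. I would also need to confirm that reversibility on the regular graph lets me identify the degree-weighted distribution on $\S$ with the relevant stationary measure. Once these normalizations are pinned down, the inequality follows from a one-line Markov argument, so the real work is entirely in the crossing-rate bookkeeping rather than in any deep probabilistic estimate.
```
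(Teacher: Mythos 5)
Your proof is correct in substance and follows the same skeleton as the paper's: both arguments reduce the lemma to the statement that a lazy RW started from the degree-weighted (stationary) distribution restricted to $\S$ escapes within $T = \alpha\,\phi(\S)^{-1}$ steps with probability at most $\alpha/2$, and then finish with the identical degree-weighted averaging/Markov step over the ``bad'' nodes (which have escape probability $>1-\beta$). The difference is where that escape bound comes from: the paper simply cites Spielman and Teng (their Proposition 2.5) as a black box, while you re-derive it via expected boundary-crossing counting. Your derivation is completable, and the ``careful accounting'' you flag as the remaining obstacle has a clean resolution: the per-step crossing probability of the lazy walk from the restricted stationary start $\pi_\S$ is exactly $\sum_{v\in\S}\frac{d(v)}{d(\S)}\cdot\frac{|\E(v,\S^c)|}{2d(v)} = \phi(\S)/2$ (the laziness contributes the $1/2$), and to control the later steps -- where the walk is no longer distributed as $\pi_\S$ -- use monotonicity of the expected crossing count in the initial distribution together with the pointwise domination $\pi_\S \leq \pi/\pi(\S)$; under the full stationary $\pi$ the walk is stationary at every step, so the expected number of crossings from $\pi_\S$ over $T$ steps is at most $T\phi(\S)/2 = \alpha/2$, and the escape event implies at least one crossing. (Equivalently, one can show the killed walk's law stays dominated by $\pi_\S$.) Two cosmetic points: the bad nodes give strict inequality $\Pr(\tau_v(\S^c)\leq T) > 1-\beta$, which you need to recover the strict inequality in the lemma (the case of an empty bad set being trivial), and your first display writes the crossing rate as $\phi(\S)$ before correcting it to $\phi(\S)/2$ -- in a final write-up you should state the $\phi(\S)/2$ rate directly. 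What your route buys is a self-contained proof; what the paper's buys is brevity, at the cost of importing an external result.
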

\begin{proof}
Let $Y_v$ denote the event that $\tau_v(\S^c)>\alpha \phi(\S)^{-1}$, let $\pi$ denote the stationary distribution of the RW, and let $\pi_\S$ denote the distribution $\pi$ conditioned on being in the set $\S$: $\pi_\S(v) = \Id(v\in\S) \pi(v)/\pi(\S)$.
From \cite[Proposition 2.5]{spielman2013local} we know that $\Pr_{v\sim\pi_\S}(Y_v) \geq 1 - \alpha/2$.
For all $v \notin \S_{\alpha,\beta}$, it holds by definition that $\Pr(Y_v) < \beta$, so that we can bound
\[
\Pr_{v\sim\pi_\S}(Y_v)
= \sum_{v\in\S} \pi_\S(v) \Pr(Y_v)
< (1-\pi_\S(\S_{\alpha,\beta})) \beta + \pi_\S(\S_{\alpha,\beta}).
\]
Combined with the former inequality, and the fact that $\pi_\S(\S_{\alpha,\beta}) = d(\S_{\alpha,\beta})/d(\S)$, this proves the claimed statement.
\end{proof}

This lemma implies that $d(\S_d)/d(\S) > 19/20$.
As we will require this later, we also wish to prove something slightly stronger: there exists a subset $\S'$ of the diffusion core $\S_d$, from which we can bound the probability that a random walk stays inside the diffusion core, rather than only inside $\S$.

\begin{lemma} \label{lem:diff-core}
There exists a node subset $\S'$ of the diffusion core $\S_d$, with $d(\S') > d(\S)/3$, from which a $(120\phi(\S))^{-1}$-step RW remains inside $\S_d$ with probability at least $9/10$:
\[
\forall v \in \S': \quad
\Pr(\tau_v(\S_d^c) > (120 \phi(\S))^{-1}) \geq 9/10.
\]
\end{lemma}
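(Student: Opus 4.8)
The plan is to locate $\S'$ by an averaging argument over the stationary distribution $\pi_\S$ (the stationary distribution conditioned on $\S$, as in the proof of Lemma~\ref{lem:diff-size}). Concretely, I will show that the \emph{average} probability of escaping the core $\S_d$ within $T := (120\phi(\S))^{-1}$ steps is at most $1/15$, and then harvest $\S'$ as the set of starting nodes whose individual escape probability is below $1/10$, using Markov's inequality.

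Write $B := \S \setminus \S_d$ for the ``fringe'' of $\S$ lying outside its core, so that $\S_d^c = \S^c \cup B$ and the first exit time from $\S_d$ satisfies $\tau_v(\S_d^c) = \min\{\tau_v(\S^c),\,\tau_v(B)\}$. I will split the failure event $\{\tau_v(\S_d^c) \le T\}$ into two disjoint cases: (a) the walk reaches $\S^c$ within $T$ steps before ever touching $B$, and (b) the walk touches $B$ within $T$ steps before reaching $\S^c$. Setting $F := \Pr_{v\sim\pi_\S}(\tau_v(\S_d^c) \le T)$, the crux is to bound $F$ by relating both cases to the single event of escaping $\S$ within the slightly longer horizon $(30\phi(\S))^{-1} = T + (40\phi(\S))^{-1}$.

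For case (a) the walk has already left $\S$ by time $T$. For case (b), at the first visit to $B$ the walk occupies some node $w \in B$; since $\S_d = \S_{1/40,3/4}$, every such $w$ satisfies $\Pr(\tau_w(\S^c) \le (40\phi(\S))^{-1}) > 1/4$ by definition of the canonical core, so by the strong Markov property the walk escapes $\S$ within the extra $(40\phi(\S))^{-1}$ steps with probability at least $1/4$. Combining the two disjoint cases yields $\Pr_{v\sim\pi_\S}(\tau_v(\S^c) \le (30\phi(\S))^{-1}) \ge F/4$. Applying the averaged escape bound of Spielman and Teng~\cite{spielman2013local} (the same $\Pr_{v\sim\pi_\S}(\tau_v(\S^c) > \alpha\phi(\S)^{-1}) \ge 1-\alpha/2$ used in Lemma~\ref{lem:diff-size}) with $\alpha = 1/30$ gives $F/4 \le 1/60$, hence $F \le 1/15$.

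Finally, define $\S' := \{\, v \in \S : \Pr(\tau_v(\S_d^c) \le T) \le 1/10 \,\}$. Any $v \in B$ has failure probability $1$, so $\S' \subseteq \S_d$ automatically, and each $v \in \S'$ satisfies $\Pr(\tau_v(\S_d^c) > T) \ge 9/10$ as claimed. Since $\pi_\S(\S') = d(\S')/d(\S)$, Markov's inequality applied to $F \le 1/15$ gives $d(\S')/d(\S) \ge 1 - 10F \ge 1/3$; the strict bound $d(\S') > d(\S)/3$ follows from the strict inequalities in the definition of the core and in the escape bound. The step I expect to be the main obstacle is this excursion argument: one must budget the horizon so that $T$ plus one escape window $(40\phi(\S))^{-1}$ still lies in the regime $(30\phi(\S))^{-1}$ where the averaged escape bound is strong, and track the constants $(1/40,\,3/4,\,1/120)$ carefully so that the resulting fraction lands at $1/3$ rather than just short of it.
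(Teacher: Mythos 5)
Your proof is correct, and it reaches the same constants as the paper's, but it is organized in a genuinely different (in fact, dual) way. Both arguments hinge on the identical excursion idea: if the walk exits $\S_d$ within $T=(120\phi(\S))^{-1}$ steps, then by Definition \ref{def:diff-core} the exit point is a node from which the walk escapes $\S$ within a further $(40\phi(\S))^{-1}$ steps with probability $>1/4$, and the total budget $T+(40\phi(\S))^{-1}=(30\phi(\S))^{-1}$ is matched against the Spielman--Teng averaged escape bound at $\alpha=1/30$. The difference is the order of quantification. The paper defines $\S'$ \emph{up front} as the finer diffusion core $\S_{1/30,39/40}$, gets $d(\S')>d(\S)/3$ from Lemma \ref{lem:diff-size}, and then proves the escape bound from $\S_d$ \emph{per node} by contradiction: a hypothetical escape probability $>1/10$ from $\S_d$ would force an escape probability $>1/40$ from $\S$ at horizon $(30\phi(\S))^{-1}$, contradicting membership in $\S_{1/30,39/40}$. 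You instead prove a single \emph{averaged} bound $\Exp_{v\sim\pi_\S}\big[\Pr(\tau_v(\S_d^c)\leq T)\big]\leq 1/15$ by applying the excursion argument under the stationary average, and only then extract $\S'$ by Markov's inequality at threshold $1/10$. Your route avoids the auxiliary core and the contradiction framing, and it produces the \emph{maximal} set with the desired property (the paper's $\S_{1/30,39/40}$ is a subset of your $\S'$); the paper's route buys an explicit description of $\S'$ as a diffusion core in the sense of Definition \ref{def:diff-core}, letting it reuse Lemma \ref{lem:diff-size} verbatim for the size bound.

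One small point to tighten: your Markov step only gives $d(\S')/d(\S)\geq 1/3$, not the strict inequality in the statement, and the appeal to ``strict inequalities in the definitions'' should be made explicit. The fix is easy: if $F>0$ then some node with positive $\pi_\S$-mass has a strictly positive failure probability, and for such a node the excursion bound is strict (case (a) contributes with factor $1>1/4$, case (b) with a strict $>1/4$), so in fact $F<1/15$ and Markov gives $\pi_\S(\S')>1/3$; if $F=0$ then every node of $\S$ has failure probability $0$, forcing $\S\setminus\S_d=\emptyset$ and $\S'=\S$. In any case the distinction is cosmetic, since the downstream use in Theorem \ref{thm:QGET} only needs $|\A'|\geq|\A|/3$.
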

\begin{proof}
In the following we use the shorthand $\phi = \phi(\S)$.
We can set $\S'$ equal to the $(1/30,39/40)$-diffusion core, $\S' = \S_{1/30,39/40}$.
Using Definition \ref{def:diff-core} we see that $\S' \subseteq \S_d \subseteq \S$.
From Lemma \ref{lem:diff-size} we know that $d(\S') > d(\S)/3$.

We will show that $\S'$ serves as a diffusion core for $\S_d$.
Thereto fix any $v \in \S'$ and let $\kappa$ denote the hitting time $\kappa = \tau_v(\S_d^c)$.
Then we define $t$ such that $\Pr(\kappa \leq t) > 1/10$.
Now let $Y$ be the event that $\kappa \leq t$ and $\tau_u(\S^c) \leq (40\phi)^{-1}$, with $u = X_\kappa$ a random variable corresponding to the node at which the RW hits $\S_d^c$.
Then we have that $Y \Rightarrow (\tau_v(\S^c) \leq t + (40\phi)^{-1})$.
Since $u \notin \S_d$, it holds that $\Pr(\tau_u(\S^c) \leq (40\phi)^{-1}) > 1/4$.
Combined with the assumption that $\Pr(\kappa \leq t) > 1/10$, this allows to bound $\Pr(Y) > (1/4)(1/10) = 1/40$, and therefore $\Pr(\tau_v(\S^c) \leq t + (2\phi)^{-1}) > 1/40$.
However, since $v \in \S'$ we also have that $\Pr(\tau_v(\S^c) > 1/(30\phi)) \geq 39/40$, or equivalently $\Pr(\tau_v(\S^c) \leq 1/(30\phi)) \leq 1/40$.
This gives a contradiction if $t + (40\phi)^{-1} \leq 1/(30\phi)$, or equivalently $t \leq 1/(120\phi)$.
For such $t$ the initial hypothesis $\Pr(\kappa \leq t) > 1/10$ must hence be false, and therefore it must hold that $\Pr(\kappa \leq (120\phi)^{-1}) \leq 1/10$ for all $v \in \S'$.
This proves the claimed statement.
\end{proof}

We will also use the following lemma, which in essence was already present in \cite{czumaj2010testing,nachmias2010testing,kale2011expansion}.
It argues that a graph which is $\epsilon$-far from having a certain expansion must have a large subset with low expansion.
\begin{lemma} \label{lem:reject-set}
Let $G$ be an undirected $n$-node graph with degree bound $d$ that is $\epsilon$-far from having expansion $\geq \beta$, with $\beta \leq 1/10$.
Then the following holds:
\begin{itemize}
\item
There exists a subset $\A \subset \V$, with $\epsilon n/12 \leq |\A| \leq (1+\epsilon)n/2$, such that $\Phi(\A) < r_d \, \beta$, with $r_d$ a constant dependent on $d$.
\item
For any $t \leq 1/(2r_d \beta)$ and distribution $v$ having a $\gamma$-overlap with the diffusion core of $\A$, with $\gamma > 2(1+\epsilon)/3$, it holds that
\[
\|P^t v\|^2
\geq \frac{1}{n}\left(1 + 4\left(\frac{3\gamma}{4} - \frac{1+\epsilon}{2}\right)^2\right).
\]
\end{itemize}
\end{lemma}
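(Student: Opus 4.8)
The plan is to treat the two bullets separately, with almost all of the real work concentrated in the second. For the first bullet I would invoke (or re-derive) the by-now-standard structural lemma of the expansion-testing literature, following the repair argument of Czumaj--Sohler, Kale--Seshadhri and Nachmias--Shapira. The argument is by contraposition: suppose that every set $\S$ in the window $\epsilon n/12 \le |\S| \le (1+\epsilon)n/2$ had expansion $\Phi(\S) \ge r_d\beta$. One then shows that $G$ can be turned into a genuine $\beta$-expander by adding and removing fewer than $\epsilon n d$ edges, contradicting the $\epsilon$-farness hypothesis. Concretely, sparse cuts of relative size below $\epsilon n/12$ are cheap to repair (each witness set $\S$ touches only $O(d|\S|)$ edges, and the total budget $\epsilon n d$ covers them), while the assumption rules out the medium cuts, and passing to complements handles sizes between $n/2$ and $(1+\epsilon)n/2$. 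The $d$-dependent constant $r_d$ and the size thresholds are exactly the bookkeeping quantities produced by balancing the per-set repair cost against the budget $\epsilon n d$, and I would import these from the cited works rather than re-optimise them.

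The second bullet is where I would spend the effort, and it splits cleanly into a probabilistic step and a deterministic norm estimate. First I would translate the diffusion-core hypothesis into a statement about how much mass of $P^t v$ remains inside $\A$. Writing $\phi = \phi(\A)$ and using \eqref{eq:bnd-cond} together with $\Phi(\A) < r_d\beta$, the bound $t \le 1/(2r_d\beta)$ keeps $t$ of order $\phi^{-1}$; ensuring it lies below the escape threshold $(40\phi)^{-1}$ of the canonical diffusion core $\A_d = \A_{1/40,3/4}$ is the one place where I would have to check the multiplicative constant carefully. Granting this, for any core node $w \in \A_d$ we have $\Pr(X_t \in \A \mid X_0 = w) \ge \Pr(\tau_w(\A^c) > t) \ge 3/4$, since $\{\tau_w(\A^c) > t\} \subseteq \{X_t \in \A\}$. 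Summing against the initial distribution, whose $\gamma$-overlap means $\sum_{w \in \A_d} v(w) \ge \gamma$, gives
\[
q \;:=\; \Pr_{X_0 \sim v}(X_t \in \A) \;\ge\; \tfrac34 \sum_{w \in \A_d} v(w) \;\ge\; \tfrac{3\gamma}{4},
\]
which is precisely the source of the factor $3\gamma/4$ in the claimed bound.

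The remaining, purely deterministic, step is to lower bound $\|p\|^2$ for a probability vector $p = P^t v$ that places mass $q$ on a set $\A$ of relative size $a := |\A|/n \le (1+\epsilon)/2$. Cauchy--Schwarz within $\A$ and within $\A^c$ gives $\|p\|^2 \ge \frac1n(q^2/a + (1-q)^2/(1-a))$, and I would then use the identity
\[
\frac{q^2}{a} + \frac{(1-q)^2}{1-a} \;=\; 1 + \frac{(q-a)^2}{a(1-a)}
\]
together with $a(1-a) \le 1/4$ to obtain $\|p\|^2 \ge \frac1n\big(1 + 4(q-a)^2\big)$. Finally I would substitute $q \ge 3\gamma/4$ and $a \le (1+\epsilon)/2$; the hypothesis $\gamma > 2(1+\epsilon)/3$ is exactly what guarantees $3\gamma/4 > (1+\epsilon)/2 \ge a$, so that $q - a \ge \tfrac{3\gamma}{4} - \tfrac{1+\epsilon}{2} > 0$ and the square is bounded below by $\big(\tfrac{3\gamma}{4} - \tfrac{1+\epsilon}{2}\big)^2$, yielding the stated inequality verbatim.

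I expect the main obstacle to be the first bullet: the edit-distance/repair argument establishing the existence of $\A$ is combinatorially the most delicate part and is where the constant $r_d$ and the size window are forced. By contrast the second bullet is a short chain of a diffusion-core tail bound followed by a convexity/Cauchy--Schwarz estimate. A secondary, purely quantitative point of care is reconciling the $t$-window $t \le 1/(2r_d\beta)$ with the $(40\phi)^{-1}$ escape threshold of the canonical core, which may require either tightening the expansion bound in the first bullet or slightly shrinking the admissible range of $t$.
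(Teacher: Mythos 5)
Your proposal is correct and follows essentially the same route as the paper: the first bullet is imported from the literature (the paper simply cites \cite[Corollary 4.6]{czumaj2010testing}), and the second bullet is proved by combining the diffusion-core escape bound (giving $(P^tv)(\A)\geq 3\gamma/4$) with a deterministic lower bound on the 2-norm of a distribution that puts mass $q$ on a set of relative size $a\leq(1+\epsilon)/2$. The one point of divergence is that final deterministic step: the paper imports the inequality $\|w\|^2\geq\frac1n\left(1+\|w-u\|_1^2\right)$ from the proof of \cite[Lemma 4.3]{czumaj2010testing} and combines it with $\|w-u\|_1\geq 2|w(\A)-u(\A)|$, whereas you re-derive the same conclusion from scratch via Cauchy--Schwarz on $\A$ and $\A^c$ and the identity $q^2/a+(1-q)^2/(1-a)=1+(q-a)^2/(a(1-a))$; both yield exactly $\frac1n\left(1+4(q-a)^2\right)$, so your version is a self-contained (and before relaxing $a(1-a)\leq 1/4$, marginally sharper) substitute for the cited bound. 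Your flagged concern about reconciling the window $t\leq 1/(2r_d\beta)$ with the core's escape threshold is well placed: the paper's own proof only establishes the conclusion for $t\leq 1/(40\,r_d\beta)$ (via the threshold $1/(40\Phi(\A))\leq 1/(40\phi(\A))$), not for the stated $1/(2r_d\beta)$, since $\phi(\A)$ can be arbitrarily close to $r_d\beta$. This constant-factor slack is harmless downstream --- the value of $t$ used in Theorem \ref{thm:QGET} satisfies the tighter bound as well --- but your instinct that the admissible range of $t$ should be shrunk (or the statement's constant adjusted) is accurate.
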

\begin{proof}
The first bullet is proven in \cite[Corollary 4.6]{czumaj2010testing}.
To prove the second bullet, we use the fact that for a general probability distribution $w$ it holds that
\[
\|w\|^2
\geq \frac{1}{n} \left(1 + \|w - u\|_1^2\right),
\]
with $u$ the uniform distribution.
This bound can be found in the proof of \cite[Lemma 4.3]{czumaj2010testing}.
To lower bound the right hand side, we will use that
\[
\| w - u \|_1
= 2 \max_{\S \subseteq \V} |w(\S) - u(\S)|
\geq 2 |w(\A) - u(\A)|.
\]
If $w = P^t v$, we can lower bound $(P^t v)(\A)$ since this represents the probability that a $t$-step RW from $v$ end in $\A$.
By definition of the diffusion core, we know that a $(t\leq 1/(40\Phi(\A)))$-step RW, starting anywhere in the diffusion core $\A_d$ of $\A$, remains inside $\A$ with probability at least $3/4$.
Since $v$ has a $\gamma$-overlap with $\A_d$, this proves that a $(t \leq 1/(40 r_d \beta))$-step RW from $v$ remains inside $\A$ with probability at least $3\gamma/4$.
This implies that $(P^t v)(\A) \geq 3\gamma/4$ and hence $\| P^t v - u \|_1 \geq 2 |3\gamma/4 - u(\A)|$.
By our assumption that $\gamma > 2(1+\epsilon)/3$, and since $u(\A) = |\A|/n \leq (1+\epsilon)/2$, this implies that $\| P^t v - u \|_1 \geq 2 (3\gamma/4 - (1+\epsilon)/2)$.
Combining this with the above inequality proves the claimed bound.
\end{proof}

\subsection{Quantum Walks}
Quantum walks (QWs) form an elegant quantum counterpart to random walks on graphs.
They similarly explore a graph in a local manner, by performing queries in superposition to the neighbors of certain nodes, as illustrated in \eqref{eq:quantum-query} in Section \ref{sec:query-model}.
In the following, let $P$ be a symmetric random walk transition matrix (as will be the case for us), and let $\S \subseteq \V$ be the initial seed set.
We denote by $\ket{\S} = |\S|^{-1/2} \sum_{v \in \S} \ket{v}$ the quantum state that is a uniform superposition over nodes in $\S$.
Starting from $\ket{S}$, QWs allow to create a quantum state or ``quantum sample'' of the form
\[
\ket{\psi_t}
= P^t\ket{\S} + \ket{\Gamma}.
\]
Here the first component forms a quantum encoding of the RW probability distribution started from a uniformly random node in $\S$.
The second component denotes some auxiliary garbage state in which we will not be interested.
In our earlier work on quantum expansion testing, we introduced a QW technique called ``quantum fast-forwarding'' (QFF) that allows to approximate the above quantum sample in the square root of the classical runtime.
The following lemmas follow from \cite{apers2019quantum}, recalling that $d_M$ denotes the maximum degree of the graph.
\begin{lemma}[QFF] \label{lem:QFF}
Starting from the state $\ket{\S}$, there exists a QW algorithm that outputs a state $\epsilon$-close to the quantum sample $\ket{\psi_t}$ in complexity $\tO(t^{1/2} d_M^{1/2} \log^{1/2}(1/\epsilon))$.
\end{lemma}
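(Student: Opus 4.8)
The plan is to obtain this as a direct consequence of the quantum fast-forwarding construction of \cite{apers2019quantum}, whose core guarantee already applies $P^t$ (up to the orthogonal garbage component $\ket{\Gamma}$) to any efficiently preparable initial state; the only point requiring attention is that we seed the procedure with the superposition $\ket{\S}$ rather than with a single-vertex state $\ket{v}$.

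First I would recall the Szegedy-type quantum walk operator $W(P)$ attached to the transition matrix $P$, which is symmetric because the graph is regular, whose eigenphases encode the eigenvalues of $P$ through an $\arccos$ relation. The engine of QFF is a Chebyshev truncation of the monomial $x \mapsto x^t$: on the spectrum of $P$, which lies in $[-1,1]$, this map is approximated to uniform error $\epsilon$ by a Chebyshev polynomial of degree $K = O(\sqrt{t \log(1/\epsilon)})$, and the spectral correspondence lets $W(P)^k$ realise the $k$-th Chebyshev term $T_k(P)$ on the relevant invariant subspace. Feeding $\ket{\S}$ into this construction as the input register, I would invoke \cite{apers2019quantum} to produce a state $\epsilon$-close to $P^t\ket{\S}+\ket{\Gamma}$ using $O(K)=\tO(t^{1/2}\log^{1/2}(1/\epsilon))$ applications of $W(P)$ and its inverse.

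It then remains to account for implementation cost. Preparing the input $\ket{\S}$ is handled once, within budget, by the Kerenidis and Prakash QRAM preparation quoted earlier. Each application of $W(P)$ reduces to a coin preparation and reflection in the query model of Section \ref{sec:query-model}, whose cost, as accounted in \cite{apers2019quantum}, is $\tO(d_M^{1/2})$; multiplying by the $\tO(t^{1/2}\log^{1/2}(1/\epsilon))$ walk steps yields the advertised complexity $\tO(t^{1/2} d_M^{1/2}\log^{1/2}(1/\epsilon))$. I expect the genuine work — and hence the part I would lean on \cite{apers2019quantum} for rather than redo — to be the error analysis: verifying that the uniform $\epsilon$-approximation of $x^t$ by the truncated Chebyshev series translates into an $\epsilon$-close output state once the coefficient normalisation of the linear-combination-of-unitaries implementation is taken into account, and confirming that this bound is completely insensitive to which preparable state plays the role of the input. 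Given that insensitivity, the lemma follows by instantiating the cited result at input $\ket{\S}$.
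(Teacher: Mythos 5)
Your proposal is correct and follows essentially the same route as the paper: invoke the quantum fast-forwarding guarantee (which applies to an arbitrary preparable initial state, so in particular to $\ket{\S}$) to get $\tO(t^{1/2}\log^{1/2}(1/\epsilon))$ quantum walk steps, note that each walk step costs $O(d_M^{1/2})$ in the query model, and multiply. The paper merely packages the first ingredient as a citation to an explicit statement (Theorem 7 of the Ambainis--Apers--Gily\'en--Jeffery reference) rather than re-deriving the Chebyshev truncation, which is exactly the delegation you propose.
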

\begin{proof}
We first note that, starting from $\ket{\S}$, the state $\ket{\psi_t}$ can be $\epsilon$-approximated using a number of \textit{QW steps} that scales as $\tO(t^{1/2} \log^{1/2}(1/\epsilon))$ (hiding polylog-dependencies on $t$, $n$, $\epsilon$).
An explicit statement of this fact can be found as Theorem 7 in \cite{ambainis2019quadratic}.
Second we note that a single QW step can be implemented in complexity $O(d_M^{1/2})$, as is mentioned in \cite{apers2019quantum,apers2019qsampling}.
Combining both facts proves the lemma.
\end{proof}
For clarity of exposition, we will ignore the approximation error $\epsilon$ of QFF in the rest of the paper.
By linearity it suffices to set $\epsilon$ inverse polynomially small in $n$, and so the approximation error will only add a log-factor to the overall complexity (this in contrast to the approximation error in the lemma below, which in fact is important).

Given access to such quantum samples, we can use a standard quantum routine called ``quantum amplitude estimation'' to estimate $\|P^t\ket{\S}\|$.
This leads to the following lemma, which is an immediate corollary from \cite[Theorem 5]{apers2019quantum} and Lemma \ref{lem:QFF}.
\begin{lemma}[2-norm estimator] \label{lem:2-norm-est}
There exists a QW algorithm that, with probability at least $1-\delta$, outputs an estimate $a$ such that $\big| \|P^t \ket{v}\| - a \big| \leq \epsilon$.
The algorithm has complexity $\tO(t^{1/2} d_M^{1/2} \epsilon^{-1} \log\delta^{-1})$ and uses $\tO(\epsilon^{-1} \log\delta^{-1})$ reflections around $\ket{\S}$.
\end{lemma}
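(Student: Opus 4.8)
The plan is to obtain the lemma as a composition of two ingredients already in hand: the quantum-sample preparation of Lemma~\ref{lem:QFF}, and the quantum amplitude estimation subroutine recorded as Theorem~5 of \cite{apers2019quantum}, followed by a standard median-of-estimates amplification of the success probability. Since the text advertises this as an immediate corollary, the work is entirely in correctly wiring together the amplitude estimation primitive with the QFF state preparation and in bookkeeping the two claimed cost quantities.

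First I would identify $\|P^t\ket{\S}\|$ with the amplitude that amplitude estimation is designed to measure. By Lemma~\ref{lem:QFF} there is a QW unitary $\A$ that, starting from $\ket{\S}$, prepares the quantum sample $\ket{\psi_t} = P^t\ket{\S} + \ket{\Gamma}$, where the ``good'' component $P^t\ket{\S}$ lives in a fixed, efficiently recognizable subspace (the node register carrying no garbage flag) and $\ket{\Gamma}$ is orthogonal to it. Writing $a = \|P^t\ket{\S}\|$, the unitary $\A$ thus prepares a state whose overlap with the good subspace has magnitude exactly $a$, so estimating this overlap is precisely the task solved by amplitude estimation.

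Next I would set up the amplitude estimation iterate. One Grover-type iteration consists of a reflection around the good subspace -- a cheap, purely computational phase flip conditioned on the garbage flag -- composed with a reflection around the prepared state $\A\ket{\S}$, implemented in the usual way as $\A\,R_{\ket{\S}}\,\A^{-1}$, where $R_{\ket{\S}}$ reflects around the initial state $\ket{\S}$. The reflection $R_{\ket{\S}}$ is supplied by the Kerenidis--Prakash routine \cite{kerenidis2016quantum} at a cost of $\tO(1)$ QRAM operations per use, while each application of $\A$ or $\A^{-1}$ costs $\tO(t^{1/2}d_M^{1/2})$ by Lemma~\ref{lem:QFF}. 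By the standard amplitude estimation guarantee (Theorem~5 of \cite{apers2019quantum}), running $\tO(\epsilon^{-1})$ such iterations returns, with constant probability, an estimate $a'$ with $|a' - a| \le \epsilon$, using $O(1)$ reflections around $\ket{\S}$ per iteration and hence $\tO(\epsilon^{-1})$ in total.

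Finally I would amplify the constant success probability to $1-\delta$ by repeating the estimator $O(\log\delta^{-1})$ times and outputting the median, via a routine Chernoff argument. This multiplies both counts by $\log\delta^{-1}$, giving $\tO(\epsilon^{-1}\log\delta^{-1})$ reflections around $\ket{\S}$ and total complexity $\tO(\epsilon^{-1}\log\delta^{-1})\cdot\tO(t^{1/2}d_M^{1/2}) = \tO(t^{1/2}d_M^{1/2}\epsilon^{-1}\log\delta^{-1})$, as claimed. The one genuine subtlety -- the step I would be most careful about -- is that amplitude estimation acts on the state actually produced by QFF rather than on an exact quantum sample; here I would invoke the remark following Lemma~\ref{lem:QFF}, noting that choosing the QFF approximation error inverse-polynomially small keeps the induced perturbation of the estimated amplitude well below $\epsilon$ by linearity, so it is harmlessly absorbed into the $\tO$.
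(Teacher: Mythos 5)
Your proposal is correct and follows essentially the same route as the paper, which indeed obtains this lemma as a direct corollary of the amplitude estimation result (Theorem 5 of \cite{apers2019quantum}) composed with the QFF state preparation of Lemma \ref{lem:QFF}, with the QFF approximation error absorbed by taking it inverse-polynomially small and the success probability boosted to $1-\delta$ by median amplification. Your explicit wiring of the Grover iterate as $\A\,R_{\ket{\S}}\,\A^{-1}$ together with the cheap reflection around the flagged subspace, and your accounting of the reflections around $\ket{\S}$ via the Kerenidis--Prakash QRAM routine, is exactly the intended construction (note only that the $\ket{v}$ in the lemma statement should read $\ket{\S}$, a typo in the paper rather than an issue with your argument).
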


As discussed in Section \ref{sec:comp-model} (and equal to the approach in \cite{apers2019qsampling}), we can use a QRAM data structure to prepare and reflect around the quantum state $\ket{\S}$.
The effective complexity of these operations in our computation model is then $\tO(1)$ (i.e., polylogarithmic in $n$).

\section{Evolving Set Processes} \label{sec:ESP}

Evolving Set Processes (ESPs) have been used for analyzing the mixing time of Markov chains \cite{morris2005evolving}, and as an algorithmic tool for performing local graph clustering \cite{andersen2009finding,andersen2016almost}.
Derived from some original Markov chain over a node set $\V$, an ESP is a Markov chain over \textit{subsets} of the node set.
For our particular case we will assume that the original Markov chain corresponds to the (lazy) RW.
Given that the current state of the ESP is $\S \subseteq \V$, its next state is then determined by the following rule: draw a variable $U$ uniformly at random from the interval $[0,1]$, and set the next state
\[
\S'
= \big\{ v \in \V \,:\, P(\S,v) \geq U \big\}.
\]
Here $P(\S,v)$ denotes the probability that a single RW step from $v$ ends up in $\S$, and is given by $P(v,\S) = |\E(v,\S)|/(2d(v)) + \Id(v\in\S)/2$.
This gives rise to an ESP transition matrix $K: 2^\V \times 2^\V \to [0,1]$.
Notice that only states in the inner or outer boundary of $\S$ can be added or removed: $|\E(v,\S)| = d(v)$ and $\Id(v\in\S)=1$ if $y$ and all of its neighbors lie in $\S$, whereas $|\E(v,\S)| = \Id(v\in\S) = 0$ if $v$ nor any of its neighbors lie in $\S$.
This process has absorbing states $\S = \emptyset$ and $\S = \V$, both of which have no boundary.
For algorithmic purposes, it is desirable to prevent the ESP from being absorbed in the empty set.
To this end, the transition probabilities can be slightly altered:
\[
\hat{K}(\S,\S')
= \frac{d(\S')}{d(\S)} K(\S,\S').
\]
Clearly the transition probability to the empty set is now equal to zero.
$\hat{K}$ is again a stochastic transition matrix, and the resulting process is called the \textit{volume-biased} ESP (yet for brevity we will simply refer to it as the ESP).
We refer the reader to \cite{andersen2016almost,levin2017markov} for more details on the ESP and its volume-biased variant.

Starting inside some low-expansion set $\S$, ESPs are used as a means of locally constructing or exploring $\S$.
In our case, we only wish to retrieve a smaller subset, typically of size $|\S|^{1/3}$.
This subset however should be sufficiently localized ``inside'' $\S$, i.e., have a sufficient overlap with the smaller diffusion core of $\S$.
To this end we refine the ESP analysis: we use our Lemma \ref{lem:diff-core} to show that also the ESP will remain in the diffusion core with large probability.
The following Section \ref{sec:ESP-compl} introduces some useful properties of the ESP, and in Section \ref{sec:ESP-thm} we prove the main tool.

\subsection{ESP Complexity and Properties} \label{sec:ESP-compl}

As we wish to use an ESP as an algorithmic means, it is desirable to quantify the resources needed to simulate it.
Thereto we define the \textit{cost} of a sample path
\[
\cost(\S_0,\dots,\S_t)
= d(\S_0) + \sum_{i=1}^t \big( d(\S_i \triangle \S_{i-1}) + |\partial(\S_{i-1})| \big),
\]
with $d(\S)$ the total degree of a subset $\S$ and $\S \triangle \S'$ the symmetric difference between $\S$ and $\S'$.
We also define a stopping time $\tau(T,B,\theta)$ for the ESP:
\begin{definition}
The stopping time $\tau(T,B,\theta)$ is a random variable that equals the first time $\tau$ at which either $\phi(\S_\tau) \leq \theta$, $\tau = T$, or $\cost_\tau > B$.
\end{definition}
\noindent
The following theorem from \cite{andersen2016almost} bounds the complexity of sampling from the ESP with stopping rule $\tau(T,B,\theta)$.
\begin{theorem}[{\cite[Proposition 5.3]{andersen2016almost}}] \label{thm:ESP-cost}
There exists an algorithm that takes as input a node $v$, two integers $T,B \geq 0$ and $\theta \in [0,1]$.
Let $\S_0 = \{v\}$ and define the stopping time $\tau = \tau(T,B,\theta)$.
The algorithm generates a sample path $(\S_0,\dots,\S_\tau)$ of the ESP and outputs the last set $\S_\tau$.
The complexity of the algorithm is $O(B \log m)$.
\end{theorem}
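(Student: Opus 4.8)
The plan is to exhibit a concrete simulation algorithm and then charge its running time against $\cost(\S_0,\dots,\S_\tau)$ on a step-by-step basis, so that the bound follows once we observe that the stopping rule caps the accumulated cost at $O(B)$. Two ingredients are needed: a purely \emph{local} rule for sampling a single transition $\S\to\S'$ of the volume-biased chain $\hat{K}$, and incremental data structures that realize each transition in time proportional to its contribution to the cost, up to an $O(\log m)$ factor for dictionary accesses.

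For the single step I would use the following sampler, which touches only the boundary of $\S$. First draw a node $X\in\S$ with probability $d(X)/d(\S)$, take one lazy random-walk step from $X$ to a node $Y$, draw a threshold $U$ uniformly on $[0,P(\S,Y)]$, and output $\S' = \{v : P(\S,v)\ge U\}$. To see that this samples from $\hat{K}(\S,\cdot)$, note that by detailed balance $\Pr[Y=y] = \sum_{x\in\S}\frac{d(x)}{d(\S)}P(y,x) = \frac{d(y)\,P(\S,y)}{d(\S)}$, so the induced density of $U$ at a point $u$ equals $\frac{1}{d(\S)}\sum_y d(y)\,\Id(P(\S,y)\ge u) = d(\S'_u)/d(\S)$, where $\S'_u=\{v:P(\S,v)\ge u\}$. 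Integrating this density over the threshold interval on which $\S'_u$ equals a fixed set $A$ yields $\Pr[\S'=A] = \frac{d(A)}{d(\S)}K(\S,A) = \hat{K}(\S,A)$, as wanted. A useful structural consequence is a grow-or-shrink dichotomy: since every $v\in\S$ has $P(\S,v)\ge 1/2$ and every $v\notin\S$ has $P(\S,v)\le 1/2$, a step with $U\le 1/2$ can only add outer-boundary nodes and a step with $U>1/2$ can only remove inner-boundary nodes. Hence the changed nodes form a subset of the boundary, of size $|\S_i\triangle\S_{i-1}|$.

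For the implementation I would maintain: the membership indicator of $\S$; for every boundary node $v$ the cached count $|\E(v,\S)|$, hence the value $P(\S,v)$, stored in a balanced search structure keyed by $P$-value; and a degree-weighted sampler over $\S$ for drawing $X$. Everything is keyed by node in dictionaries, so each lookup, insertion or deletion costs $O(\log m)$. A single transition then decomposes as follows. Drawing $X,Y,U$ costs $O(\log m)$. By the grow-or-shrink dichotomy, determining $\S'$ is one range query returning exactly the $|\S_i\triangle\S_{i-1}|$ crossing nodes, at output-sensitive cost $O\big((1+|\S_i\triangle\S_{i-1}|)\log m\big)$. Finally, toggling the membership of each changed node $w$ and \emph{incrementally} revising the cached counts (and search-structure positions) of its neighbours costs $O(d(w)\log m)$, i.e. $O(d(\S_i\triangle\S_{i-1})\log m)$ in total; newly exposed boundary nodes are exactly neighbours of changed nodes, so they are discovered within this same budget. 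The one-off construction of $\S_0=\{v\}$ and of its neighbours' counts costs $O(d(v)\log m)=O(d(\S_0)\log m)$.

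Summing over the path, the total work is at most
\[
O\!\left(\Big(d(\S_0) + \sum_{i=1}^{\tau}\big(1 + d(\S_i\triangle\S_{i-1})\big)\Big)\log m\right)
\le O(\cost_\tau\,\log m),
\]
where the inequality uses that every proper nonempty $\S_{i-1}$ has $|\partial\S_{i-1}|\ge 1$, so the per-step overhead $+1$ is dominated by the boundary term already present in $\cost$. Since the algorithm tracks its running cost online and halts the instant the budget $B$ is exceeded, we have $\cost_\tau=O(B)$ and hence complexity $O(B\log m)$. The delicate points I would expect to spend the most care on are the correctness of the volume-biased single-step sampler—verifying that the node-plus-uniform-threshold recipe reproduces the density $d(\S'_u)/d(\S)$ and hence the factor $d(A)/d(\S)$—and checking that the count maintenance and thresholding stay proportional to $d(\S_i\triangle\S_{i-1})$ (via incremental updates and an output-sensitive range query) rather than to the full boundary size.
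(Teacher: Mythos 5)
The paper itself does not prove this statement: it imports it verbatim as Proposition 5.3 of Andersen, Oveis-Gharan, Peres and Trevisan \cite{andersen2016almost}, so there is no in-paper argument to compare against. What you have done is reconstruct, correctly and from scratch, essentially the proof that lives in that reference. Your single-step sampler (degree-biased vertex $X\in\S$, one lazy walk step to $Y$, threshold $U$ uniform on $[0,P(\S,Y)]$) is exactly the known local simulation of the volume-biased chain $\hat{K}$, and your detailed-balance computation showing the induced threshold density is $d(\S'_u)/d(\S)$ is the standard correctness argument for it. Your accounting is in one respect slightly sharper than the usual implementation: by exploiting the lazy grow-or-shrink dichotomy and an output-sensitive range query, you charge each step only $O\big((1+d(\S_i\triangle\S_{i-1}))\log m\big)$ rather than touching the whole boundary, so the $|\partial\S_{i-1}|$ term in the definition of $\cost$ is needed only to absorb the $O(1)$ per-step overhead. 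Your justification of that absorption via ``proper nonempty sets have nonempty boundary'' implicitly assumes connectivity; the cleaner observation is that if $\partial\S_{i-1}=\emptyset$ then $\phi(\S_{i-1})=0\leq\theta$ and the stopping rule would already have fired, which also covers disconnected graphs and the absorbing state $\V$.

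One step deserves more care than you give it: the claim that halting ``the instant the budget $B$ is exceeded'' yields $\cost_\tau=O(B)$. By definition $\tau$ is the \emph{first} time at which $\cost_\tau>B$, and the algorithm must still complete that step in order to output $\S_\tau$; the guarantee you actually have is $\cost_{\tau-1}\leq B$, so $\cost_\tau\leq B+d(\S_\tau\triangle\S_{\tau-1})+|\partial\S_{\tau-1}|$. Since every set in the path satisfies $d(\S_{\tau-1})\leq\cost_{\tau-1}\leq B$ and the changed nodes lie in the (inner or outer) boundary of $\S_{\tau-1}$, the overshoot of the last step is bounded by $O(d_M B)$, which is $O(B)$ in the bounded-degree setting of this paper ($d_M\leq d$ with $d$ a constant), so your conclusion stands; but in a general-degree formulation this one-step completion is exactly where a hidden degree factor would enter, and it should be stated rather than elided.
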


\subsection{ESP for Growing Seed Set} \label{sec:ESP-thm}
Using known results on ESPs, combined with our new Lemma \ref{lem:diff-core}, we can derive the following theorem.
This constitutes the main tool that we will use to grow seed sets.
We defer the proof technicalities to Appendix \ref{app:proof-esp}.

\begin{theorem} \label{thm:ESP-seedset}
Fix a parameter $M \geq 0$.
Let $\S \subseteq \V$ be such that $\phi(\S) \leq \gamma^2/(2400\log m)$.
Let $\S' \subseteq \S$ be as defined in Lemma \ref{lem:diff-core}, and assume that $\S_0 = \{v\}$ for some $v \in \S'$.
Let $\S_\tau$ be the set returned by the ESP with stopping rule $\tau(T,B,\theta)$ and parameters $T = 20\theta^{-2} \log m$, $B = 25 M \sqrt{T \log m}$, and $\theta = \gamma$.
Then with probability at least $1/5$ we have that $d(\S_\tau \cap \S_d)/d(\S_\tau) \geq 3/4$, and either $\phi(\S_\tau) \leq \gamma$ or $d(\S_\tau) \geq M$.
The complexity of generating this set is $O(M \gamma^{-1} \log^2 m)$.
\end{theorem}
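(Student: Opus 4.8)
I need to prove Theorem 8, which grows a seed set of controlled size that stays inside the diffusion core of a low-conductance set $\S$. Let me think about what the ESP machinery from Andersen et al. gives me and how to combine it with the new diffusion-core control from Lemma 4.

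The plan is to control three quantities along the ESP sample path $(\S_0,\dots,\S_\tau)$ separately: the final volume $d(\S_\tau)$ (or the conductance $\phi(\S_\tau)$), the accumulated cost $\cost_\tau$, and the diffusion-core overlap $d(\S_\tau\cap\S_d)/d(\S_\tau)$. Each of these fails with only small constant probability, and a union bound then yields the claimed success probability $1/5$. The complexity statement is immediate: by Theorem \ref{thm:ESP-cost} the ESP with stopping rule $\tau(T,B,\theta)$ runs in $O(B\log m)$, and substituting $B=25M\sqrt{T\log m}$ with $T=20\gamma^{-2}\log m$ gives $\sqrt{T\log m}=O(\gamma^{-1}\log m)$, hence $B=O(M\gamma^{-1}\log m)$ and total complexity $O(M\gamma^{-1}\log^2 m)$.

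For the volume growth I would invoke the standard volume-biased ESP martingale of Morris--Peres as used in \cite{andersen2016almost}: as long as the process has not yet stopped we have $\phi(\S_t)>\theta=\gamma$, and in this regime $\sqrt{d(\S_t)}$ grows in expectation by a multiplicative factor $1+\Omega(\gamma^2)$ per step. Over $T=20\gamma^{-2}\log m$ steps this compounds to a large polynomial factor in $m$, so with constant probability the volume reaches the target $M$ before time $T$ --- unless the conductance first drops below $\gamma$, which is precisely the favourable alternative in the statement. This is exactly where the quadratic relation $\phi(\S)\leq\gamma^2/(2400\log m)$ enters: the growth rate is governed by $\phi^2$, reflecting the same quadratic gap that pervades expansion testing. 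The cost bound is the second standard ingredient: following \cite{andersen2016almost} the expected cost accumulated up to the stopping time is $O(M\sqrt{T\log m})$ when the final volume is $O(M)$, so choosing $B=25M\sqrt{T\log m}$ a constant factor larger and applying Markov's inequality ensures that with constant probability the cost condition $\cost_t>B$ is never triggered prematurely.

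The heart of the argument --- and the only genuinely new ingredient --- is the diffusion-core containment $d(\S_\tau\cap\S_d)/d(\S_\tau)\geq 3/4$. Here I would use the duality between the volume-biased ESP and the lazy random walk: the degree-weighted probability that a node $y$ lies in $\S_t$ is governed by the transition probabilities $P^t(v,y)$, so the expected leaked mass ratio $d(\S_t\setminus\S_d)/d(\S_t)$ is controlled by the probability that a $t$-step walk from $v$ has escaped $\S_d$. Lemma \ref{lem:diff-core} supplies exactly this bound: for the starting node $v\in\S'$ and for every $t\leq(120\phi(\S))^{-1}$, the walk stays inside $\S_d$ with probability at least $9/10$. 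The crucial compatibility check is that this validity window covers the entire run: since $\phi(\S)\leq\gamma^2/(2400\log m)$ we have $(120\phi(\S))^{-1}\geq 20\gamma^{-2}\log m=T\geq\tau$, so the escape bound applies at every step. Feeding this into the duality and applying Markov's inequality to the leaked-mass ratio bounds the probability that the overlap drops below $3/4$ by a small constant.

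I expect the containment step to be the main obstacle, for two reasons. First, the clean walk duality is naturally stated for the unbiased ESP, whereas the algorithm runs the volume-biased process; transferring the escape bound across the volume reweighting without incurring the blow-up factor $1/\pi(v)=\Theta(n)$ requires tracking the leaked mass relative to the total volume rather than in absolute terms, mirroring the containment analysis of \cite{andersen2016almost} with the smaller set $\S_d$ in place of $\S$. Second, one must verify that strengthening the target from $\S$ (the known ESP guarantee) to its diffusion core $\S_d$ is legitimate, which is precisely what Lemma \ref{lem:diff-core} was engineered to provide by exhibiting $\S'$ as a diffusion core for $\S_d$. The remaining work is bookkeeping the three constant failure probabilities so that their union leaves a success probability of at least $1/5$.
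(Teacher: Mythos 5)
Your proposal is correct and follows essentially the same route as the paper's proof in Appendix \ref{app:proof-esp}: the containment claim via the ESP--random-walk overlap lemma (\cite[Lemma 4.3]{andersen2016almost}) applied to $\S_d$ with the escape bound from Lemma \ref{lem:diff-core} and the check $T \leq (120\phi(\S))^{-1}$, the volume/conductance dichotomy via the cost-ratio bound (\cite[Theorem 5.4]{andersen2016almost}) plus Markov, and a final union bound giving $1/5$. The only cosmetic difference is that you unpack the conductance-drop step into the underlying Morris--Peres volume-growth martingale, whereas the paper simply cites its packaged form (\cite[Corollary 1]{andersen2009finding}) to bound $\Pr(\tau = T) \leq \alpha^{-1}$.
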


\section{Quantum Expansion Tester} \label{sec:QET}

We are now ready to construct our new quantum expansion tester, yielding the main contribution of this paper.

\begin{algorithm}[H]
\caption{Quantum Expansion Tester} \label{alg:qget}
\normalsize
\textbf{Input:} parameters $n$ and $d$; query access to an $n$-node graph $G$ with degree bound $d$; expansion parameter $\Phi$; promise parameter $\epsilon$ \\ 
\textbf{Do:}
\begin{algorithmic}[1]
\State
set parameters:\newline
$t = 16d^2 \Phi^{-2} \log n$, $\delta = \epsilon/1000$, $K = 200 / (\epsilon(1-\delta))$,\newline
$\theta = \Phi/(2d)$, $B = 800 \sqrt{5} \lceil n^{1/3} d \rceil d \Phi^{-1} \log m$, $T = 320 \Phi^{-2} d^2 \log m$
\State
\textbf{do $K$ times}
\State \indent
select a uniformly random starting node $v$
\State \indent
run ESP from $\S_0 = \{v\}$ with stopping rule $\tau(T,B,\theta)$, outputting $\S$
\State \indent
\textbf{if} $|\S| \leq n/2$ and $\Phi(\S) \leq \Phi/2$ \textbf{then} abort and output ``\texttt{reject}''
\State \indent
use 2-norm estimator to create estimate $a$ of $\| P^t \ket{\S} \|$\newline
\-\ \hspace{5mm} to precision $\epsilon' = \sqrt{|\S|/n} (1-\sqrt{1+1/256})/4$ with probability $1-\delta$
\State \indent
\textbf{if} $a > \sqrt{|\S| n^{-1}(1+n^{-1})} + \epsilon'$ \textbf{then} abort and output ``\texttt{reject}''
\end{algorithmic}
\textbf{Output:} if no ``\texttt{reject}'', output ``\texttt{accept}''
\end{algorithm}

\begin{theorem}[Quantum Expansion Tester] \label{thm:QGET}
If $d \geq 3$ and $\epsilon < 1/16$, then Algorithm \ref{alg:qget} is a $(\Phi,\epsilon)$ expansion tester for $c = 1/(2400 (2d)^2 r_d)$, with $r_d$ as in Lemma \ref{lem:reject-set}.
The complexity of the algorithm is bounded by $\tO(n^{1/3} \Phi^{-1} d^{3/2} \epsilon^{-1})$.
\end{theorem}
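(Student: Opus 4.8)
The final statement (Theorem \ref{thm:QGET}) asks us to prove that Algorithm \ref{alg:qget} is a $(\Phi,\epsilon)$-expansion tester with the stated constant $c$, and that its complexity is $\tO(n^{1/3}\Phi^{-1}d^{3/2}\epsilon^{-1})$. By the definition of an expansion tester, there are three things to establish: (i) \textbf{completeness} — if $G$ has expansion at least $\Phi$, the algorithm accepts with probability $\geq 2/3$; (ii) \textbf{soundness} — if $G$ is $\epsilon$-far from having expansion $\geq c\Phi^2\log^{-1}(dn)$, the algorithm rejects with probability $\geq 2/3$; and (iii) the \textbf{complexity bound}. The strategy is to analyze a single iteration of the $K$-fold loop and then use the repetition to amplify success probability.

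**Plan.**

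The plan is to first dispatch the complexity bound, since it is the most mechanical part. One iteration consists of an ESP run plus a 2-norm estimation. Plugging the parameter choices into Theorem \ref{thm:ESP-seedset} (with $M = \lceil n^{1/3}d\rceil$, $\theta = \Phi/(2d) \leq \phi(\S)$-type scaling, using \eqref{eq:bnd-cond} to pass between $\Phi$ and $\phi$), the ESP cost is $O(M\gamma^{-1}\log^2 m)$, which with $M\in\tO(n^{1/3}d)$ and $\gamma\sim\Phi/d$ gives $\tO(n^{1/3}\Phi^{-1}d^2)$-type QRAM operations; I would reconcile the stated $d^{3/2}$ exponent carefully here. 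For the quantum part, Lemma \ref{lem:2-norm-est} gives complexity $\tO(t^{1/2}d_M^{1/2}(\epsilon')^{-1}\log\delta^{-1})$. Substituting $t\in\tO(d^2\Phi^{-2})$ so that $t^{1/2}\in\tO(d\Phi^{-1})$, $d_M^{1/2}=d^{1/2}$, and the precision $\epsilon' = \Theta(\sqrt{|\S|/n}) = \Theta(\sqrt{M/n})$ so that $(\epsilon')^{-1}\in\tO((n/M)^{1/2}) = \tO(n^{1/3}d^{-1/2})$, multiplies out to $\tO(n^{1/3}\Phi^{-1}d^{3/2})$ per iteration, and the $K = \Theta(\epsilon^{-1})$ factor contributes the $\epsilon^{-1}$. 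Since the QRAM preparation and reflection around $\ket{\S}$ cost only $\tO(1)$ per use by the Kerenidis--Prakash theorem, the quantum subroutine dominates and the total is $\tO(n^{1/3}\Phi^{-1}d^{3/2}\epsilon^{-1})$ as claimed.

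Next I would prove completeness. If $G$ has expansion $\geq\Phi$, then no set $\S$ with $|\S|\leq n/2$ has $\Phi(\S)\leq\Phi/2$, so line 5 never rejects (up to the possibility that the ESP returns a large set, which I must rule out or handle). For line 7, I must show that for \emph{every} seed set $\S$ the algorithm might produce, the true value $\|P^t\ket{\S}\|$ is small enough that the estimate $a$ exceeds the threshold $\sqrt{|\S|n^{-1}(1+n^{-1})}+\epsilon'$ only with small probability. Here the key analytic input is a \emph{upper} bound on $\|P^t\ket{\S}\|$ when $G$ is a good expander: mixing is fast, so $P^t\ket{\S}$ is close to uniform and $\|P^t\ket{\S}\|^2$ is close to $|\S|/n$. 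The threshold has been engineered with exactly this $\sqrt{|\S|/n}$ baseline plus a gap of order $n^{-1}$, and $\epsilon'$ is a small enough fraction of $\sqrt{|\S|/n}$ that the estimate stays below threshold with probability $\geq 1-\delta$; a union bound over the $K$ iterations, with $\delta = \epsilon/1000$ and $K\in\Theta(\epsilon^{-1})$, keeps total failure below $1/3$.

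**The main obstacle.**

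The hard part will be soundness. Here I invoke Lemma \ref{lem:reject-set}: since $G$ is $\epsilon$-far from expansion $\geq c\Phi^2\log^{-1}(dn)$ and $c$ is chosen precisely as $1/(2400(2d)^2 r_d)$, there exists a set $\A$ with $\epsilon n/12\leq|\A|\leq(1+\epsilon)n/2$ and small expansion. A uniformly random start node $v$ lands in the diffusion core $\A_d$ (more precisely in the refined subset $\A'$ of Lemma \ref{lem:diff-core}) with probability $\Omega(\epsilon)$, and over $K\in\Theta(\epsilon^{-1})$ repetitions at least one such good start occurs with constant probability. Conditioned on a good start, Theorem \ref{thm:ESP-seedset} guarantees that the grown seed set satisfies $d(\S\cap\A_d)/d(\S)\geq 3/4$ and \emph{either} has low conductance (so line 5 rejects after translating $\phi\leq\gamma$ into $\Phi(\S)\leq\Phi/2$ via \eqref{eq:bnd-cond}) \emph{or} has $d(\S)\geq M$. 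In the latter branch I must feed the overlap guarantee into the second bullet of Lemma \ref{lem:reject-set}: the seed set $\ket{\S}$ has a $\gamma$-overlap with $\A_d$ with $\gamma$ large enough (exceeding $2(1+\epsilon)/3$, which is where $\epsilon<1/16$ is used), forcing $\|P^t\ket{\S}\|^2 \geq n^{-1}(1+4(3\gamma/4-(1+\epsilon)/2)^2)$. The delicate bookkeeping is to verify that this lower bound, minus the estimation error $\epsilon'$, still exceeds the rejection threshold $\sqrt{|\S|n^{-1}(1+n^{-1})}+\epsilon'$ — i.e. that the additive gap of order $1/n$ in the lower bound genuinely dominates the threshold slack. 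The precision $\epsilon' = \sqrt{|\S|/n}(1-\sqrt{1+1/256})/4$ and the factor $1/256$ inside it are tuned exactly so this inequality holds; threading these constants through so that both the completeness gap and the soundness gap separate correctly, while simultaneously respecting the $t\leq 1/(2r_d\beta)$ constraint of Lemma \ref{lem:reject-set} against the chosen $t = 16d^2\Phi^{-2}\log n$, is the crux of the proof.
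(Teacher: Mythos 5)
Your plan follows the paper's own proof almost exactly: the same three-part split (completeness via a fast-mixing bound and a union bound over the $K$ iterations; soundness via Lemma \ref{lem:reject-set}, Theorem \ref{thm:ESP-seedset} with $M=\lceil n^{1/3}d\rceil$, and the 2-norm estimator of Lemma \ref{lem:2-norm-est}; then the complexity accounting with $K\in\Theta(\epsilon^{-1})$), and you correctly locate where $\epsilon<1/16$, the constant $1/256$, and the constraint $t\leq 1/(2r_d\beta)$ enter. Your remark that the ESP step costs $O(n^{1/3}d^2\Phi^{-1}\log^2(dn))$, in apparent tension with the stated $d^{3/2}$, is a fair observation that applies to the paper's own accounting as well (there the $d^{3/2}$ arises from the quantum estimation step, which dominates for the stated parameters). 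The completeness worry about the ESP returning a large set is vacuous: step 5 only rejects when $|\S|\leq n/2$, and the mixing upper bound $\|P^t\ket{\S}\|\leq|\S|^{-1/2}\sum_{s\in\S}\|P^t\ket{s}\|\leq\sqrt{|\S|n^{-1}(1+n^{-1})}$ (from the Nachmias--Shapira bound the paper invokes) holds for every $\S$.

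There is, however, one genuine error in your soundness step. You assert that the overlap guarantee forces $\|P^t\ket{\S}\|^2 \geq n^{-1}\left(1+4\left(3\gamma/4-(1+\epsilon)/2\right)^2\right)$. That is the bound for the \emph{probability distribution}, not the quantum state: the second bullet of Lemma \ref{lem:reject-set} applies to a $1$-norm-normalized distribution, here the uniform distribution $\pi_\S$ over $\S$, yielding $\|P^t\pi_\S\|^2\geq n^{-1}(1+1/256)$. The quantum state is $\ket{\S}=|\S|^{1/2}\pi_\S$ (it is $2$-norm normalized), so the correct conclusion is $\|P^t\ket{\S}\|=|\S|^{1/2}\|P^t\pi_\S\|\geq\sqrt{|\S|/n}\sqrt{1+1/256}$. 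As literally written, your lower bound does not exceed the rejection threshold $\sqrt{|\S|n^{-1}(1+n^{-1})}+\epsilon'$ once $|\S|\geq 2$, since the threshold scales like $\sqrt{|\S|/n}$ while your bound does not, so the ``delicate bookkeeping'' you defer would fail rather than close. The same slip shows in your description of the separation as an ``additive gap of order $1/n$'': with the $|\S|^{1/2}$ factor restored, the separation is multiplicative on the $\sqrt{|\S|/n}$ scale (a factor $\sqrt{1+1/256}$ against $\sqrt{1+n^{-1}}$), which is exactly why the precision $\epsilon'\leq\sqrt{|\S|/n}\left(\sqrt{1+1/256}-1\right)/4$ suffices and why the estimator cost $\epsilon'^{-1}\in\tO(\sqrt{n/|\S|})$ stays at $\tO(n^{1/3})$ using $|\S|\geq M/d$. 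Once this normalization is fixed, your argument coincides with the paper's.
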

\begin{proof}
First we prove that if $\Phi(G) \geq \Phi$, then the algorithm accepts with probability at least $2/3$.
Thereto note that by definition of the vertex expansion, necessarily $\Phi(\S) \geq \Phi(G) \geq \Phi$ if $|\S| \leq n/2$.
Hence the algorithm cannot falsely reject in step 5.
To exclude rejection in step 7, we use the result in \cite[Proof of Theorem 2.1]{nachmias2010testing} showing that if $\Phi(G) \geq \Phi$ then for all nodes $v\in\V$ and time $t \geq 16d^2 \Phi^{-2} \log n$ it holds that $\|P^t \ket{v}\| \leq \sqrt{n^{-1}(1+n^{-1})}$.
This allows to bound $\|P^t \ket{\S}\| \leq |\S|^{-1/2} \sum_{s\in\S} \| P^t \ket{s} \| \leq \sqrt{|\S| n^{-1}(1+n^{-1})}$.
Using the 2-norm estimator from Lemma \ref{lem:2-norm-est}, the estimate $a$ will with probability $1-\delta$ be such that $a \leq \sqrt{|\S| n^{-1}(1+n^{-1})} + \epsilon'$.
Step 7 will therefore reject falsely only with probability at most $\delta$.
The total probability of a faulty rejection can then be bounded by $K\delta < 1/3$.
Since the algorithm accepts if it never rejects, it will correctly accept the graph with probability at least $2/3$.

Next we prove that if $G$ is $\epsilon$-far from having expansion $\geq c \Phi^2 \log^{-1} (dn)$, then the algorithm rejects with probability at least $2/3$.
Thereto we use Lemma \ref{lem:reject-set} from Section \ref{sec:diff-core}, which states that in this case there exist a ``bad'' subset $\A$, with $(1+\epsilon)n/2 \geq |\A| \geq \epsilon n/12$, such that
\begin{equation} \label{eq:bnd-phiA}
\Phi(\A)
< r_d c \Phi^2 \log^{-1} (dn)
= \frac{1}{2400} \bigg(\frac{\Phi}{2d}\bigg)^2 \frac{1}{\log(dn)}.
\end{equation}
From $\A$, we can define the diffusion core $\A_d$ and the subset $\A' \subseteq \A_d$ as in Lemma \ref{lem:diff-core}, which states that $d(\A') > d(\A)/3$ and hence $|\A'| \geq |\A|/3$.
The initial node $v$ will hence be in $\A'$ with probability at least $|\A|/(3n) \geq \epsilon/36$.

Conditioning on $v \in \A'$, we can analyze the ESP output set $\S$ using Theorem \ref{thm:ESP-seedset}.
We choose $M = \lceil n^{1/3} d\rceil$.
Using the bound \eqref{eq:bnd-phiA}, which by \eqref{eq:bnd-cond} implies the same upper bound for $\phi(\A)$, $\S$ will with probability at least $1/5$ be such that (i) $d(\S \cap \A_d) \geq 3d(\S)/4$, and therefore $|\S \cap \A_d| \geq 3|\S|/4$, and (ii) either $\phi(\S) \leq \Phi/(2d)$ or $d(\S) \geq M$.
If $\phi(\S) \leq \Phi/(2d)$ and $|\S| \leq n/2$, then we have a proof that $G$ has vertex expansion $\Phi(G) \leq \Phi/2$, and hence we reject the graph in step 5.
To see this, simply note that $\phi(\S) \leq \Phi/(2d)$ implies that $\Phi(\S) \leq \Phi/2$ (again by \eqref{eq:bnd-cond}).
In any other case, we know that $d(\S) \geq M$ and hence $|\S| \geq M/d$.
Given such a set $\S$, consider the uniform distribution $\pi_\S$ over $\S$.
Since $|\S \cap \A_d| \geq 3|\S|/4$, we know that $\pi_\S$ has a $3/4$-overlap with $\A_d$.
By the second bullet of Lemma \ref{lem:reject-set} this implies that for all $t \leq \log(dn)/(2r_d c \Phi^2) = 1200 (2d)^2 \log(dn) \Phi^{-2}$,
\[
\|P^t \pi_\S\|
\geq \sqrt{\frac{1}{n} \left(1 + 4\left(\frac{3\gamma}{4} - \frac{1+\epsilon}{2}\right)^2 \right)}
\geq \sqrt{\frac{1}{n} \left(1 + \frac{1}{256}\right)}.
\]
using that $\gamma = 3/4$ and $\epsilon \leq 1/16$.
By Lemma \ref{lem:2-norm-est}, the estimate $a$ will then with probability $1-\delta$ be such that $a \geq \sqrt{|\S| n^{-1}} \sqrt{1 + 1/256} - \epsilon'$.
If $\epsilon' \leq \sqrt{|\S| n^{-1}} (\sqrt{1+1/256}-1)/4$, then this is strictly larger than $\sqrt{|\S| n^{-1}(1+n^{-1})} + \epsilon'$ for sufficiently large $n$, allowing to correctly reject the graph with probability at least $1-\delta$.

Now we can bound the total probability of correctly rejecting the graph in a single iteration.
Thereto we multiply the probability that $v \in \A'$ ($\geq \epsilon/36$), the ESP process succeeds ($\geq 1/5$) and the 2-norm estimator succeeds ($\geq 1-\delta$), yielding a total rejection probability of at least $p = \epsilon(1-\delta)/180$.
The total probability of correctly rejecting at least once in $K$ iterations is therefore at least $1-(1-p)^K$.
Using the elementary inequality $(1-p)^{1/p} < 1/e$ for any $0 < p \leq 1$, we can lower bound the rejection probability as
\[
1 - (1-p)^K
> 1 - \left(\frac{1}{e}\right)^{Kp}
\geq \frac{2}{3},
\]
provided that $K \geq \ln(3)/p = 180 \ln(3) / (\epsilon(1-\delta))$, which is ensured by our choice of $K$.
This concludes the proof that Algorithm \ref{alg:qget} is a $(\Phi,\epsilon)$-expansion tester.

Towards bounding the complexity of the algorithm, we consider a single iteration of the for-loop.
By Theorem \ref{thm:ESP-seedset}, the complexity of simulating the ESP in step 4 is $O(M \theta^{-1} \log^2(dn))$, which is $O(n^{1/3} d^2 \Phi^{-1} \log^2(dn))$.
By Lemma \ref{lem:2-norm-est}, the $(\epsilon',\delta)$ 2-norm estimator in step 6 has complexity
\[
\tO(t^{1/2} d^{1/2} \epsilon'^{-1} \log\delta^{-1})
\in \tO(n^{1/3} d^{3/2} \Phi^{-1} \log\epsilon^{-1}).
\]
Since we iterate the for-loop $K\in O(\epsilon^{-1})$ times, this gives the claimed complexity.
\end{proof}

\section*{Acknowledgements}
This work greatly benefited from discussions and comments by Alain Sarlette, Anthony Leverrier, Ronald de Wolf and Andr\'e Chailloux, as well as from comments and suggestions by multiple anonymous referees.
Most of the work was done while part of the CWI-Inria International Lab.
We acknowledge support from the QuantERA ERA-NET Cofund in Quantum Technologies implemented within the European Union's Horizon 2020 Programme (QuantAlgo project) and from the Belgian Fonds de la Recherche Scientifique - FNRS under grant no R.50.05.18.F (QuantAlgo).

\bibliographystyle{alpha2}
\bibliography{biblio}

\newcommand{\etalchar}[1]{$^{#1}$}
\newcommand{\lName}{1}\newcommand{\arxiv}[1]{arXiv:
  \href{https://arxiv.org/abs/#1}{\ttfamily{#1}}\?}\def\?#1{\if.#1{}\else#1\fi}\newcommand{\doi}[1]{doi:
  \href{https://doi.org/#1}{\ttfamily{\nolinkurl{#1}}}\?}\newcommand{\skp}[3]{#2}\newcommand{\focs
  }[1]{\if\lName1\skp{ }{Proceedings of the #1 {IEEE} Symposium on Foundations
  of Computer Science ({FOCS})}{ }\else{FOCS}\fi}\newcommand{\stoc
  }[1]{\if\lName1\skp{ }{Proceedings of the #1 {ACM} Symposium on Theory of
  Computing ({STOC})}{ }\else{STOC}\fi}\newcommand{\soda }[1]{\if\lName1\skp{
  }{Proceedings of the #1 {ACM-SIAM} Symposium on Discrete Algorithms
  ({SODA})}{ }\else{SODA}\fi}\newcommand{\stacs }[1]{\if\lName1\skp{
  }{Proceedings of the #1 Symposium on Theoretical Aspects of Computer Science
  ({STACS})}{ }\else{STACS}\fi}\newcommand{\itcs }[1]{\if\lName1\skp{
  }{Proceedings of the #1 Innovations in Theoretical Computer Science
  Conference (ITCS)}{ }\else{ITCS}\fi}\newcommand{\fsttcs }[1]{\if\lName1\skp{
  }{Proceedings of the #1 International Conference on Foundations of Software
  Technology and Theoretical Computer Science (FSTTCS)}{
  }\else{FSTTCS}\fi}\newcommand{\ccc }[1]{\if\lName1\skp{ }{Proceedings of the
  #1 {IEEE} Conference on Computational Complexity ({CCC})}{
  }\else{CCC}\fi}\newcommand{\colt }[1]{\if\lName1\skp{ }{Proceedings of the #1
  Conference On Learning Theory (COLT)}{ }\else{COLT}\fi}\newcommand{\nips
  }[1]{\if\lName1\skp{ }{Advances in Neural Information Processing Systems #1
  ({NIPS})}{ }\else{NIPS}\fi}\newcommand{\aistats }[1]{\if\lName1\skp{
  }{Proceedings of the #1 International Conference on Artificial Intelligence
  and Statistics ({AISTATS})}{ }\else{AISTATS}\fi}\newcommand{\icalp
  }[1]{\if\lName1\skp{ }{Proceedings of the #1 International Colloquium on
  Automata, Languages, and Programming (ICALP)}{
  }\else{ICALP}\fi}\newcommand{\icml }[1]{\if\lName1\skp{ }{Proceedings of the
  #1 International Conference on Machine Learning (ICML)}{
  }\else{ICML}\fi}\newcommand{\esa }[1]{\if\lName1\skp{ }{Proceedings of the #1
  European Symposium on Algorithms (ESA)}{ }\else{ESA}\fi}\newcommand{\jacm
  }{\if\lName1\skp{ }{Journal of the ACM}{ }\else{J. ACM}\fi}\newcommand{\jams
  }{\if\lName1\skp{ }{Journal of the AMS}{ }\else{J. AMS}\fi}\newcommand{\pams
  }{\if\lName1\skp{ }{Proceedings of the AMS}{ }\else{Proc.
  AMS}\fi}\newcommand{\linalgappl }{\if\lName1\skp{ }{Linear Algebra and its
  Applications}{ }\else{Lin. Alg. \& App.}\fi}\newcommand{\jalgo
  }{\if\lName1\skp{ }{Journal of Algorithms}{ }\else{J.
  Alg.}\fi}\newcommand{\jcss }{\if\lName1\skp{ }{Journal of Computer and System
  Sciences}{}\else{J. Comp. Sys. Sci.}\fi}\newcommand{\cc }{\if\lName1\skp{
  }{Computational Complexity}{ }\else{Comp. Comp.}\fi}\newcommand{\algor
  }{\if\lName1\skp{ }{Algorithmica}{ }\else{Alg.}\fi}\newcommand{\comba
  }{\if\lName1\skp{ }{Combinatorica}{ }\else{Comb.}\fi}\newcommand{\cacm
  }{\if\lName1\skp{ }{Communications of the ACM}{ }\else{Comm.
  ACM}\fi}\newcommand{\sigart }{\if\lName1\skp{ }{SIGART Bulletin}{
  }\else{SIGART Bull.}\fi}\newcommand{\sigactn }{\if\lName1\skp{ }{SIGACT
  News}{ }\else{SIGACT News}\fi}\newcommand{\eatcsbul }{\if\lName1\skp{
  }{Bulletin of the {EATCS}}{ }\else{Bull. {EATCS}}\fi}\newcommand{\siamrev
  }{\if\lName1\skp{ }{SIAM Review}{ }\else{SIAM Rev.}\fi}\newcommand{\siamjc
  }{\if\lName1\skp{ }{SIAM Journal on Computing}{ }\else{SIAM J.
  Comp.}\fi}\newcommand{\siamjo }{\if\lName1\skp{ }{SIAM Journal on
  Optimization}{ }\else{SIAM J. Opt.}\fi}\newcommand{\siamjdm }{\if\lName1\skp{
  }{SIAM Journal on Discrete Mathematics}{ }\else{SIAM J. Disc.
  Math.}\fi}\newcommand{\siamjnum }{\if\lName1\skp{ }{SIAM Journal on Numerical
  Analysis}{ }\else{SIAM J. Num. Anal.}\fi}\newcommand{\siamjmathanal
  }{\if\lName1\skp{ }{SIAM Journal on Mathematical Analysis}{ }\else{SIAM J.
  Math. Anal.}\fi}\newcommand{\discmath }{\if\lName1\skp{ }{Discrete
  Mathematics}{ }\else{Disc. Math.}\fi}\newcommand{\das }{\if\lName1\skp{
  }{Discrete Applied Mathematics}{ }\else{Disc. App.
  Math.}\fi}\newcommand{\amatstat }{\if\lName1\skp{ }{Annals of Mathematical
  Statistics}{ }\else{Ann. Math. Stat.}\fi}\newcommand{\rms }{\if\lName1\skp{
  }{Russian Mathematical Surveys}{ }\else{Russ. Math.
  Surv.}\fi}\newcommand{\invmath }{\if\lName1\skp{ }{Inventiones Mathematicae}{
  }\else{Inv. Math.}\fi}\newcommand{\jnumber }{\if\lName1\skp{ }{Journal of
  Number Theory}{ }\else{J. Num. Th.}\fi}\newcommand{\toc }{\if\lName1\skp{
  }{Theory of Computing}{ }\else{Th. Comp.}\fi}\newcommand{\quantum
  }{\if\lName1\skp{ }{{Quantum}}{ }\else{Quant.}\fi}\newcommand{\cmp
  }{\if\lName1\skp{ }{Communications in Mathematical Physics}{ }\else{Comm.
  Math. Phys.}\fi}\newcommand{\rspa }{\if\lName1\skp{ }{Proceedings of the
  Royal Society A}{ }\else{Proc. Roy. Soc. A}\fi}\newcommand{\qic
  }{\if\lName1\skp{ }{Quantum Information and Computation}{ }\else{Quant. Inf.
  \& Comp.}\fi}\newcommand{\physrev }{\if\lName1\skp{ }{Physical Review}{
  }\else{Phys. Rev.}\fi}\newcommand{\pra }{\if\lName1\skp{ }{Physical Review
  A}{ }\else{Phys. Rev. A}\fi}\newcommand{\prb }{\if\lName1\skp{ }{Physical
  Review B}{ }\else{Phys. Rev. B}\fi}\newcommand{\pre }{\if\lName1\skp{
  }{Physical Review E}{ }\else{Phys. Rev. E}\fi}\newcommand{\prx
  }{\if\lName1\skp{ }{Physical Review X}{ }\else{Phys. Rev.
  X}\fi}\newcommand{\prl }{\if\lName1\skp{ }{Physical Review Letters}{
  }\else{Phys. Rev. Lett.}\fi}\newcommand{\physrep }{\if\lName1\skp{ }{Physics
  Reports}{ }\else{Phys. Rep.}\fi}\newcommand{\rmp }{\if\lName1\skp{ }{Reviews
  of Modern Physics}{ }\else{Rev. Mod. Phys. }\fi}\newcommand{\phystoday
  }{\if\lName1\skp{ }{Physics Today}{ }\else{Phys.
  Today}\fi}\newcommand{\physics }{\if\lName1\skp{ }{Physics}{
  }\else{Phys.}\fi}\newcommand{\nature }{\if\lName1\skp{ }{Nature}{
  }\else{Nat.}\fi}\newcommand{\natcomm }{\if\lName1\skp{ }{Nature
  Communications}{ }\else{Nat. Comm.}\fi}\newcommand{\natphys }{\if\lName1\skp{
  }{Nature Physics}{ }\else{Nat. Phys.}\fi}\newcommand{\npjqi }{\if\lName1\skp{
  }{npj Quantum Information}{ }\else{npj Quant. Inf.}\fi}\newcommand{\scirep
  }{\if\lName1\skp{ }{Scientific Reports}{ }\else{Sci.
  Rep.}\fi}\newcommand{\science }{\if\lName1\skp{ }{Science}{
  }\else{Sci.}\fi}\newcommand{\jpa }{\if\lName1\skp{ }{Journal of Physics A:
  Mathematical and Theoretical}{ }\else{J. Phys. A}\fi}\newcommand{\ijtp
  }{\if\lName1\skp{ }{International Journal of Theoretical Physics}{
  }\else{Int. J. Th. Phys.}\fi}\newcommand{\jmo }{\if\lName1\skp{ }{Journal of
  Modern Optics}{ }\else{J. Mod. Opt.}\fi}\newcommand{\jstatph
  }{\if\lName1\skp{ }{Journal of Statistical Physics}{ }\else{J. Stat.
  Phys.}\fi}\newcommand{\lncs }{\if\lName1\skp{ }{Lecture Notes in Computer
  Science}{ }\else{L. Notes Comp. Sci.}\fi}\newcommand{\lnai }{\if\lName1\skp{
  }{Lecture Notes in Artificial Intelligence}{ }\else{L. Notes Art.
  Int.}\fi}\newcommand{\lnm }{\if\lName1\skp{ }{Lecture Notes in Mathematics}{
  }\else{L. Notes Math.}\fi}\newcommand{\tams }{\if\lName1\skp{ }{Transactions
  of the American Mathematical Society}{ }\else{Trans.
  AMS}\fi}\newcommand{\ieeeit }{\if\lName1\skp{ }{{IEEE} Transactions on
  Information Theory}{ }\else{{IEEE} Trans. Inf. Th.}\fi}\newcommand{\iscs
  }{\if\lName1\skp{ }{International Series in Computer Science}{ }\else{Int.
  Ser. Comp. Sci.}\fi}\newcommand{\tocl }{\if\lName1\skp{ }{Theory of Computing
  Library}{ }\else{Th. Comp. Lib.}\fi}
\begin{thebibliography}{CKK{\etalchar{+}}18}

\bibitem[ACL06]{andersen2006local}
Reid Andersen, Fan~R.K. Chung, and Kevin Lang.
\newblock Local graph partitioning using {Page\-Rank} vectors.
\newblock In {\em \focs{47th}}, pages 475--486. IEEE, 2006.
\newblock \doi{10.1109/FOCS.2006.44}.

\bibitem[ACL11]{ambainis2011quantum}
Andris Ambainis, Andrew~M. Childs, and Yi-Kai Liu.
\newblock Quantum property testing for bounded-degree graphs.
\newblock In {\em Approximation, Randomization, and Combinatorial Optimization.
  Algorithms and Techniques}, pages 365--376. Springer, 2011.
\newblock \doi{10.1007/978-3-642-22935-0_31}.

\bibitem[AGJK20]{ambainis2019quadratic}
Andris Ambainis, Andr\'as Gily\'en, Stacey Jeffery, and Martins Kokainis.
\newblock Quadratic speedup for finding marked vertices by quantum walks.
\newblock In {\em \stoc{52nd}}, pages 412--424. ACM, 2020.
\newblock \doi{10.1145/3357713.3384252}.

\bibitem[AGPT16]{andersen2016almost}
Reid Andersen, Shayan~Oveis Gharan, Yuval Peres, and Luca Trevisan.
\newblock Almost optimal local graph clustering using evolving sets.
\newblock {\em \jacm}, 63(2):15, 2016.
\newblock \doi{10.1145/2856030}.

\bibitem[Amb07]{ambainis2007quantum}
Andris Ambainis.
\newblock Quantum walk algorithm for element distinctness.
\newblock {\em \siamjc}, 37(1):210--239, 2007.
\newblock \doi{10.1137/S0097539705447311}.

\bibitem[AP09]{andersen2009finding}
Reid Andersen and Yuval Peres.
\newblock Finding sparse cuts locally using evolving sets.
\newblock In {\em \stoc{41st}}, pages 235--244. ACM, 2009.
\newblock \doi{10.1145/1536414.1536449}.

\bibitem[A19]{apers2019qsampling}
Simon Apers.
\newblock Quantum walk sampling by growing seed sets.
\newblock In {\em \esa{27th}}, volume 144 of {\em Leibniz International
  Proceedings in Informatics (LIPIcs)}, pages 9:1--9:12. Springer, 2019.
\newblock \doi{10.4230/LIPIcs.ESA.2019.9}.

\bibitem[AS19]{apers2019quantum}
Simon Apers and Alain Sarlette.
\newblock Quantum fast-forwarding {Markov} chains and property testing.
\newblock {\em \qic}, 19(3\&4):181--213, 2019.
\newblock doi:
  \href{https://dl.acm.org/doi/10.5555/3370245.3370246}{\ttfamily{\nolinkurl{10.5555/3370245.3370246}}}.

\bibitem[CKK{\etalchar{+}}18]{chiplunkar2018testing}
Ashish Chiplunkar, Michael Kapralov, Sanjeev Khanna, Aida Mousavifar, and Yuval
  Peres.
\newblock Testing graph clusterability: Algorithms and lower bounds.
\newblock In {\em Proceedings of the 59th Annual IEEE Symposium on Foundations
  of Computer Science}. IEEE, 2018.
\newblock \doi{10.1109/FOCS.2018.00054}.

\bibitem[CPS15]{czumaj2015testing}
Artur Czumaj, Pan Peng, and Christian Sohler.
\newblock Testing cluster structure of graphs.
\newblock In {\em \stoc{47th}}, pages 723--732. ACM, 2015.
\newblock \doi{10.1145/2746539.2746618}.

\bibitem[CS10]{czumaj2010testing}
Artur Czumaj and Christian Sohler.
\newblock Testing expansion in bounded-degree graphs.
\newblock {\em Combinatorics, Probability and Computing}, 19(5-6):693--709,
  2010.
\newblock \doi{10.1017/S096354831000012X}.

\bibitem[GR02]{goldreich2002property}
Oded Goldreich and Dana Ron.
\newblock Property testing in bounded degree graphs.
\newblock {\em \algor}, 32(2):302--343, 2002.
\newblock \doi{10.1007/s00453-001-0078-7}.

\bibitem[GR11]{goldreich2011testing}
Oded Goldreich and Dana Ron.
\newblock On testing expansion in bounded-degree graphs.
\newblock In {\em Studies in Complexity and Cryptography. Miscellanea on the
  Interplay between Randomness and Computation}, pages 68--75. Springer, 2011.
\newblock \doi{10.1007/978-3-642-22670-0_9}.

\bibitem[HLW06]{hoory2006expander}
Shlomo Hoory, Nathan Linial, and Avi Wigderson.
\newblock Expander graphs and their applications.
\newblock {\em Bulletin of the American Mathematical Society}, 43(4):439--561,
  2006.
\newblock \doi{10.1090/S0273-0979-06-01126-8}.

\bibitem[KP17]{kerenidis2016quantum}
Iordanis Kerenidis and Anupam Prakash.
\newblock Quantum recommendation systems.
\newblock In {\em \itcs{8th}}, pages 49:1--49:21. Schloss
  Dagstuhl--Leibniz-Zentrum fuer Informatik, 2017.
\newblock \doi{10.4230/LIPIcs.ITCS.2017.49}.

\bibitem[KS11]{kale2011expansion}
Satyen Kale and Comandur Seshadhri.
\newblock An expansion tester for bounded degree graphs.
\newblock {\em \siamjc}, 40(3):709--720, 2011.
\newblock \doi{10.1137/100802980}.

\bibitem[LPW17]{levin2017markov}
David~A Levin, Yuval Peres, and Elizabeth~L Wilmer.
\newblock {\em {M}arkov chains and mixing times}.
\newblock American Mathematical Society, 2017.
\newblock \doi{10.1090/mbk/058}.

\bibitem[LR99]{leighton1999multicommodity}
Tom Leighton and Satish Rao.
\newblock Multicommodity max-flow min-cut theorems and their use in designing
  approximation algorithms.
\newblock {\em \jacm}, 46(6):787--832, 1999.
\newblock \doi{10.1145/331524.331526}.

\bibitem[LRV13]{louis2013complexity}
Anand Louis, Prasad Raghavendra, and Santosh Vempala.
\newblock The complexity of approximating vertex expansion.
\newblock In {\em 2013 IEEE 54th Annual Symposium on Foundations of Computer
  Science}, pages 360--369. IEEE, 2013.
\newblock \doi{10.1109/FOCS.2013.46}.

\bibitem[MdW16]{montanaro2016survey}
Ashley Montanaro and Ronald de~Wolf.
\newblock A survey of quantum property testing.
\newblock {\em Theory of Computing Library Graduate Surveys}, (7):1--81, 2016.
\newblock \doi{10.4086/toc.gs.2016.007}.

\bibitem[MP05]{morris2005evolving}
Ben Morris and Yuval Peres.
\newblock Evolving sets, mixing and heat kernel bounds.
\newblock {\em Probability Theory and Related Fields}, 133(2):245--266, 2005.
\newblock \doi{10.1007/s00440-005-0434-7}.

\bibitem[NS10]{nachmias2010testing}
Asaf Nachmias and Asaf Shapira.
\newblock Testing the expansion of a graph.
\newblock {\em Information and Computation}, 208(4):309, 2010.
\newblock \doi{10.1016/j.ic.2009.09.002}.

\bibitem[OSV12]{orecchia2012approximating}
Lorenzo Orecchia, Sushant Sachdeva, and Nisheeth~K. Vishnoi.
\newblock Approximating the exponential, the {Lanczos} method and an
  $\widetilde{O}(m)$-time spectral algorithm for balanced separator.
\newblock In {\em \stoc{44th}}, pages 1141--1160. ACM, 2012.
\newblock \doi{10.1145/2213977.2214080}.

\bibitem[ST13]{spielman2013local}
Daniel~A. Spielman and Shang-Hua Teng.
\newblock A local clustering algorithm for massive graphs and its application
  to nearly linear time graph partitioning.
\newblock {\em \siamjc}, 42(1):1--26, 2013.
\newblock \doi{10.1137/080744888}.

\bibitem[Wat01]{watrous2001quantum}
John Watrous.
\newblock Quantum simulations of classical random walks and undirected graph
  connectivity.
\newblock {\em \jcss}, 62(2):376--391, 2001.
\newblock \doi{10.1006/jcss.2000.1732}.

\end{thebibliography}

\appendix

\section{Proof of ESP Algorithm} \label{app:proof-esp}

In this appendix we provide the proof of Theorem \ref{thm:ESP-seedset}.
We make use of several known properties that can be attributed to the output set:
\begin{itemize}
\item
\textit{Size}:
The cost of simulating the process is related to the size of the output set.
\begin{lemma}[{\cite[Theorem 5.4]{andersen2016almost}}] \label{lem:size}
For any starting set $\S_0$ and any stopping time $\tau$ that is upper bounded by $T$, it holds that
\[
\Exp[\cost_\tau/d(\S_\tau)]
\leq 1 + 4 \sqrt{T\log m}.
\]
\end{lemma}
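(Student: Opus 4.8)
The plan is to exploit the fact that the volume-biased process $\hat K$ is the Doob $h$-transform of the plain evolving set process $K$ under $h(\S)=d(\S)$, and thereby reduce the claim to a cost bound under $K$, where the degree sequence is a martingale. First I would recall the martingale property of the plain ESP, $\Exp_K[d(\S_i)\mid\S_{i-1}]=d(\S_{i-1})$, which is exactly the normalization making $\hat K(\S,\S')=\tfrac{d(\S')}{d(\S)}K(\S,\S')$ stochastic. Consequently the Radon--Nikodym derivative of the stopped path law of $\hat K$ with respect to that of $K$ equals $d(\S_\tau)/d(\S_0)$, so that for any bounded stopping time $\tau\le T$,
\[
\Exp_{\hat K}\!\left[\frac{\cost_\tau}{d(\S_\tau)}\right]
=\Exp_K\!\left[\frac{d(\S_\tau)}{d(\S_0)}\cdot\frac{\cost_\tau}{d(\S_\tau)}\right]
=\frac{1}{d(\S_0)}\,\Exp_K[\cost_\tau].
\]
This reduces the entire statement to showing $\Exp_K[\cost_\tau]\le d(\S_0)\big(1+4\sqrt{T\log m}\big)$ under the \emph{plain} ESP.

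Next I would evaluate the expected per-step cost under $K$. Integrating the threshold rule $\S_i=\{v:P(\S_{i-1},v)\ge U\}$ over $U\sim\mathrm{Unif}[0,1]$, a node $v$ is added or removed with probability $P(\S_{i-1},v)$, resp. $1-P(\S_{i-1},v)$; summing the degrees with the explicit lazy-walk transition $P(\S,v)=|\E(v,\S)|/(2d(v))+\Id(v\in\S)/2$ collapses, via the edge identity $|\E(v,\S)|+|\E(v,\S^c)|=d(v)$, to
\[
\Exp_K[d(\S_i\triangle\S_{i-1})\mid\S_{i-1}]=|\E(\S_{i-1},\S_{i-1}^c)|=\phi(\S_{i-1})\,d(\S_{i-1}).
\]
Since every boundary vertex carries at least one crossing edge, $|\partial\S_{i-1}|\le|\E(\S_{i-1},\S_{i-1}^c)|$, so each step contributes at most $2\phi(\S_{i-1})d(\S_{i-1})$ in expectation. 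With the leading term $d(\S_0)$ absorbing the ``$1$'', it remains to bound the expected cumulative edge boundary, namely $\Exp_K[\sum_{i=1}^{\tau}\phi(\S_{i-1})\,d(\S_{i-1})]\le 2\sqrt{T\log m}\;d(\S_0)$.

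The heart of the argument, and the step I expect to be the main obstacle, is this cumulative-boundary estimate. Here I would invoke the Morris--Peres growth gauge for evolving sets: concavity of $\sqrt{\,\cdot\,}$ applied to the spread of $d(\S_i)$ about its conditional mean yields $\Exp_K[\sqrt{d(\S_i)}\mid\S_{i-1}]\le\sqrt{d(\S_{i-1})}\,(1-c\,\phi(\S_{i-1})^2)$ for an absolute constant $c$. This makes $\sqrt{d(\S_i)}$ a supermartingale whose telescoping drift controls $\Exp_K[\sum_i\phi(\S_{i-1})^2\sqrt{d(\S_{i-1})}]\le c^{-1}\sqrt{d(\S_0)}$. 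A Cauchy--Schwarz step pairing $\phi^2\sqrt d$ against $d^{3/2}$ recovers the target $\sum\phi\,d$, and the two remaining factors are tamed using $\tau\le T$ together with the martingale identity $\Exp_K[d(\S_{i-1})]=d(\S_0)$; the $\log m$ factor enters through a dyadic decomposition of the trajectory over the $O(\log m)$ volume scales $d(\S)\in[2^j,2^{j+1}]$, on each of which the $\sqrt d$-weights are comparable and Cauchy--Schwarz over time supplies the $\sqrt T$. Getting the constants and the scale-decomposition to combine into exactly $4\sqrt{T\log m}$ is the delicate part; everything preceding it, the change of measure and the per-step boundary identity, is essentially mechanical.
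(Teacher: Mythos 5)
The paper does not prove this lemma; it is imported verbatim from Andersen, Oveis-Gharan, Peres and Trevisan \cite{andersen2016almost}, so your attempt can only be judged on its own merits. The first two stages of your plan are sound: $d(\S_t)$ is indeed a martingale under the plain ESP, the Radon--Nikodym derivative of the stopped volume-biased path law is $d(\S_\tau)/d(\S_0)$ (valid here since $\tau \leq T$ is bounded), and your per-step computation $\Exp_K[d(\S_i \triangle \S_{i-1}) \mid \S_{i-1}] = |\E(\S_{i-1},\S_{i-1}^c)|$ together with $|\partial\S| \leq |\E(\S,\S^c)|$ is correct. This correctly reduces everything to $\Exp_K\big[\sum_{i<\tau}\phi(\S_i)d(\S_i)\big] \leq 2\,d(\S_0)\sqrt{T\log m}$.

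The gap is in the step you yourself flag as the heart: the $\sqrt{\cdot}$-gauge route cannot deliver this bound, even with the dyadic decomposition. The Morris--Peres drift gives $\Exp_K[\sum_{t<\tau}\phi_t^2\sqrt{d_t}] \leq 8\sqrt{d_0}$, and your pairing $\phi d = \sqrt{\phi^2\sqrt{d}}\cdot\sqrt{d^{3/2}}$ then yields, via Cauchy--Schwarz and $\Exp_K[\sum_{t<\tau} d_t^{3/2}] \leq \sqrt{2m}\,T d_0$,
\[
\Exp_K\Big[\sum_{t<\tau}\phi_t d_t\Big]
\leq \sqrt{8T}\; d_0^{3/4}\,(2m)^{1/4},
\]
which exceeds the target $2d_0\sqrt{T\log m}$ by a factor $\tOmega\big((m/d_0)^{1/4}\big)$ --- polynomially large exactly in the regime the tester uses, namely $\S_0 = \{v\}$ a single vertex. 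The dyadic repair fails for the same structural reason: on the scale $d_t \in [2^j, 2^{j+1})$ the occupation bound $\Exp|I_j| \leq Td_0 2^{-j}$ (from the martingale property) recoups only $2^{-j/2}$ after the time-wise Cauchy--Schwarz, while the $\sqrt{d}$-weighted drift bound recoups $2^{-j/4}$, leaving a per-scale contribution growing like $2^{j/4}$; summing over scales up to $2m$ reproduces the same $m^{1/4}$ loss. The problem is that the potential $\sqrt{d}$ under-weights large sets precisely where the cost $\phi\, d$ is heaviest, and no bookkeeping over scales fixes a wrong potential. A workable repair is to replace $\sqrt{x}$ by a concave potential whose curvature deficit scales as $\phi^2 d$ rather than $\phi^2\sqrt{d}$, e.g.\ $f(x) = x\log(2em/x)$ with $f''(x) = -1/x$: the martingale property kills the first-order Taylor term, the conditional second moment satisfies $\Exp[(\Delta d_t)^2 \mid \S_t] \geq (\phi_t d_t)^2$ by Jensen, and a short argument controlling the Taylor remainder gives $\Exp_K[\sum_{t<\tau}\phi_t^2 d_t] \in O(f(d_0)) \subseteq O(d_0 \log m)$ --- the $\log m$ now enters through $f(d_0)$, not through a scale decomposition. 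A single Cauchy--Schwarz against $\Exp_K[\sum_{t<\tau} d_t] \leq T d_0$ then gives $\Exp_K[\sum_{t<\tau}\phi_t d_t] \leq d_0\sqrt{c\,T\log m}$ and hence the lemma, with constants close to the stated $4$.
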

\item
\textit{Overlap}:
If a random walk has a high probability of staying inside a certain set, then with high probability the ESP will also largely remain inside that set.
\begin{lemma}[{\cite[Lemma 4.3]{andersen2016almost}}] \label{lem:overlap}
Consider any set $\S \subseteq \V$, a starting set $\S_0 = \{v\}$ for some $v\in\S$, and an integer $T \geq 0$.
Then the following holds for all $\beta > 0$:
\[
\Pr \big( {\textstyle\min_{t\leq T}}\, d(\S_t \cap \S)/d(\S_t)
\geq 1 - \beta \, \Pr(\tau_v(\S^c) \leq T) \big)
\geq 1 - 1/\beta.
\]
\end{lemma}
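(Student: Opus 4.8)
The plan is to prove this overlap bound by exploiting the classical duality between the volume-biased ESP and the underlying random walk, and then reducing the path-level statement to a single application of Markov's inequality. The central tool I would invoke is the Diaconis--Fill / Morris--Peres coupling (see \cite{morris2005evolving,levin2017markov}): one can realize the volume-biased ESP $(\S_t)_{t\ge 0}$ started at $\S_0 = \{v\}$ jointly with a lazy RW $(X_t)_{t\ge 0}$ started at $v$ so that, conditioned on the \emph{entire} ESP sample path $\mathcal{G} = \sigma(\S_0,\dots,\S_T)$, the walk position at each time $t$ has law $\pi_{\S_t}$, i.e. $\Pr(X_t = y \mid \mathcal{G}) = \Id(y \in \S_t)\,\pi(y)/\pi(\S_t)$. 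In particular $\Pr(X_t \in \S^c \mid \mathcal{G}) = \pi_{\S_t}(\S^c) = 1 - d(\S_t \cap \S)/d(\S_t)$. This is exactly the identity that converts the \emph{set}-valued overlap $d(\S_t\cap\S)/d(\S_t)$ into the escape behaviour of a \emph{single} walk, and it is the reversibility of $P$ with respect to the degree measure that singles out the volume-biased chain as the correct dual. As a consistency check one can verify the one-step version directly: a detailed-balance computation gives $\Exp_{\hat K}[\pi_{\S_{t+1}}(A)\mid \S_t] = \sum_{u}\pi_{\S_t}(u)\,\Pr(X_1\in A\mid X_0=u)$, whose iteration from $\S_0=\{v\}$ yields $\Exp_{\hat K}[\pi_{\S_t}(A)] = \Pr(X_t\in A\mid X_0=v)$, matching the expectation of the coupling identity.

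Granting the coupling, the key inequality is almost immediate. Fix any $t \le T$. Being outside $\S$ at time $t$ forces the walk to have escaped $\S$ by time $T$, so $\{X_t \in \S^c\} \subseteq \{\tau_v(\S^c) \le T\}$, and conditioning on the ESP path gives
\[
\Pr(\tau_v(\S^c) \le T \mid \mathcal{G})
\;\ge\; \Pr(X_t \in \S^c \mid \mathcal{G})
\;=\; 1 - \frac{d(\S_t \cap \S)}{d(\S_t)}.
\]
Taking the maximum over $t \le T$ on the right, and then taking expectations over $\mathcal{G}$, yields
\[
\Exp\!\left[\max_{t\le T}\Big(1 - \tfrac{d(\S_t \cap \S)}{d(\S_t)}\Big)\right]
\;\le\; \Pr(\tau_v(\S^c) \le T).
\]

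To finish, I would set $M = \max_{t \le T}\big(1 - d(\S_t\cap\S)/d(\S_t)\big)$ and $p = \Pr(\tau_v(\S^c)\le T)$. The variable $M$ is $\mathcal{G}$-measurable, lies in $[0,1]$, and satisfies $\Exp[M] \le p$, so Markov's inequality gives $\Pr(M \ge \beta p) \le \Exp[M]/(\beta p) \le 1/\beta$. Since $M < \beta p$ is equivalent to $\min_{t\le T} d(\S_t\cap\S)/d(\S_t) > 1 - \beta p$, the complementary event satisfies
\[
\Pr\!\left(\min_{t\le T} \frac{d(\S_t\cap\S)}{d(\S_t)} \ge 1 - \beta\,\Pr(\tau_v(\S^c)\le T)\right) \ge 1 - \frac{1}{\beta},
\]
which is the claim. (When $p=0$ the walk a.s.\ never leaves $\S$ within $T$ steps, forcing $M=0$ and hence overlap $1$ throughout, so the statement holds trivially.)

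The main obstacle, and the only genuinely nontrivial ingredient, is justifying the coupling identity $\Pr(X_t \in \S^c \mid \mathcal{G}) = \pi_{\S_t}(\S^c)$ --- specifically that conditioning on the full ESP trajectory up to time $T$, rather than only up to time $t$, still leaves $X_t$ distributed as $\pi_{\S_t}$. This is the point I would treat most carefully: it holds because the volume-biased ESP is an \emph{autonomous} Markov chain (it evolves using fresh independent randomness and never reads the walk), so the walk position $X_t$ is conditionally independent of the future sets $\S_{t+1},\dots,\S_T$ given $\S_t$, while the strong stationary duality supplies $X_t \mid \S_t \sim \pi_{\S_t}$. Everything downstream of the coupling --- the containment of events, the passage to the maximum, and Markov's inequality --- is elementary, so the whole argument rests on correctly importing this duality from \cite{morris2005evolving,levin2017markov}.
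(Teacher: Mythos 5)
Your overall architecture --- Diaconis--Fill duality converting the set overlap into walk-escape behaviour, followed by a maximal bound and Markov's inequality --- matches the proof of this lemma in \cite[Lemma 4.3]{andersen2016almost} (the paper itself imports the statement without proof, so that is the relevant comparison). But the attempt breaks at exactly the step you flag as the crux. The identity $\Pr(X_t\in\S^c\mid\sigma(\S_0,\dots,\S_T))=1-d(\S_t\cap\S)/d(\S_t)$, with conditioning on the \emph{entire} trajectory, is false in general, and your justification misdescribes the coupling: in the Diaconis--Fill construction it is the \emph{walk} that is autonomous while the set process reads the walk --- given $X_{t+1}=y$ and $\S_t=A$, the threshold $U$ is drawn uniformly from $[0,P(y,A)]$ precisely so that $y\in\S_{t+1}$ is forced. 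Future sets therefore carry information about the walk's location, and conditioning on them tilts $X_t$ away from $\pi_{\S_t}$. Concretely, for the lazy walk on the path $1$--$2$--$3$ with $\S_t=\{1,2\}$, one computes $\Pr(\S_{t+1}=\{1,2,3\}\mid X_t=1)=7/12$ but $\Pr(\S_{t+1}=\{1,2,3\}\mid X_t=2)=17/24$, so conditioning on $\S_{t+1}=\{1,2,3\}$ pushes the posterior weight of $X_t=2$ strictly above $\pi_{\S_t}(2)=2/3$. What your conditional-independence argument actually needs is a reversed coupling in which the sets are autonomous and the walk is sampled conditionally, i.e.\ kernels $q_{A,B}$ with $\pi_A q_{A,B}=\pi_B$ and $\sum_B \hat{K}(A,B)\,q_{A,B}(x,\cdot)=P(x,\cdot)$; Diaconis--Fill/Morris--Peres do not supply this, and its existence is a nontrivial feasibility claim you neither state nor verify. (The expectation-level duality you check, $\Exp[\pi_{\S_t}(A)]=\Pr(X_t\in A)$, is correct but strictly weaker: per-$t$ Markov plus a union bound over $t\le T$ only yields success probability $1-(T+1)/\beta$.)

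The repair --- and this is essentially what \cite{andersen2016almost} does --- is to condition only on the past, $\mathcal{F}_t=\sigma(\S_0,\dots,\S_t)$, where the Diaconis--Fill identity is legitimate, and to obtain uniformity in $t$ from a martingale maximal inequality rather than from full-trajectory conditioning. Since $\{X_t\in\S^c\}\subseteq\{\tau_v(\S^c)\le T\}$ for every $t\le T$, one has almost surely
\[
1-\frac{d(\S_t\cap\S)}{d(\S_t)}
=\Pr\big(X_t\in\S^c\mid\mathcal{F}_t\big)
\le N_t:=\Pr\big(\tau_v(\S^c)\le T\mid\mathcal{F}_t\big),
\]
and $(N_t)_{t\le T}$ is a nonnegative Doob martingale with deterministic initial value $N_0=p:=\Pr(\tau_v(\S^c)\le T)$, because $\S_0=\{v\}$ is fixed. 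Doob's (Ville's) maximal inequality gives $\Pr(\max_{t\le T}N_t\ge\beta p)\le N_0/(\beta p)=1/\beta$, and on the complementary event the overlap bound holds simultaneously for all $t\le T$, which is the lemma. Your closing Markov manipulation and your handling of the $p=0$ edge case are fine and carry over unchanged; the sole, but genuine, defect is the full-trajectory conditioning, whose intended role must instead be played by Doob's inequality applied to $N_t$.
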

\item
\textit{Conductance}:
After $T$ steps, the ESP encounters with high probability a set of conductance $\tO(T^{-1/2})$.
\begin{lemma}[{\cite[Corollary 1]{andersen2009finding}}] \label{lem:cond}
Fix any integer $T$, and let $\theta_T = \sqrt{4T^{-1} \log m}$.
For any starting set $\S_0$ and constant $c \geq 0$, it holds that
\[
\Pr \big( {\textstyle\min_{t<T}}\; \phi(\S_t) \leq \sqrt{c} \, \theta_T \big)
\geq 1 - 1/c.
\]
\end{lemma}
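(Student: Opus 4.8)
The plan is to derive this conductance bound from the classical growth analysis of evolving sets, recast as a supermartingale argument for the \emph{volume-biased} process $\hat{K}$ that the algorithm actually simulates. The single imported ingredient is the Morris--Peres growth gauge \cite{morris2005evolving,levin2017markov}: for the \emph{unbiased} ESP $K$, whose defining property is that $\pi(\S_t)$ is a martingale, $\Exp_K[\pi(\S_{t+1})\mid\S_t]=\pi(\S_t)$, the expected square-root volume contracts in proportion to the squared conductance,
\[
\Exp_K\!\left[\sqrt{\pi(\S_{t+1})}\,\big|\,\S_t\right]
\leq \sqrt{\pi(\S_t)}\,\sqrt{1-\phi(\S_t)^2}.
\]
I would state this as a separately cited lemma, since it is the genuine technical heart and everything else is bookkeeping.

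First I would transfer this bound to $\hat{K}(\S,\S')=\tfrac{\pi(\S')}{\pi(\S)}K(\S,\S')$. Since $\hat{K}$ is exactly the Doob $h$-transform of $K$ with $h=\pi$, one has $\Exp_{\hat{K}}[f(\S_{t+1})\mid\S_t]=\pi(\S_t)^{-1}\Exp_K[\pi(\S_{t+1})f(\S_{t+1})\mid\S_t]$. Applying this to the potential $\Phi_t=\pi(\S_t)^{-1/2}$ collapses the weight $\pi(\S_{t+1})\cdot\pi(\S_{t+1})^{-1/2}=\sqrt{\pi(\S_{t+1})}$ and yields
\[
\Exp_{\hat{K}}\!\left[\Phi_{t+1}\mid\S_t\right]
\leq \Phi_t\,\sqrt{1-\phi(\S_t)^2}.
\]
Thus $\Phi_t$ is a supermartingale whose per-step contraction is governed by the conductance encountered; intuitively, as long as conductances stay large, $\pi(\S_t)$ is forced to grow, yet it is capped at $\pi(\S_t)\le1$ (so $\Phi_t\ge1$), which cannot persist.

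To quantify this I would define the compensated process $W_t=\Phi_t\prod_{s<t}(1-\phi(\S_s)^2)^{-1/2}$, which the previous display makes an honest supermartingale, so optional stopping gives $\Exp_{\hat{K}}[W_T]\le W_0=\pi(\S_0)^{-1/2}$. On the bad event $E=\{\min_{t<T}\phi(\S_t)>\sqrt{c}\,\theta_T\}$ I would use $\Phi_T\ge1$ and $-\log(1-x)\ge x$ to bound
\[
W_T \geq \prod_{s<T}(1-\phi(\S_s)^2)^{-1/2}
\geq \exp\!\Big(\tfrac12\textstyle\sum_{s<T}\phi(\S_s)^2\Big)
> \exp\!\big(\tfrac12\,T\,c\,\theta_T^2\big)
= m^{2c},
\]
where the last equality is precisely where the constant in $\theta_T=\sqrt{4T^{-1}\log m}$ (so that $\theta_T^2T=4\log m$) is calibrated against the growth-gauge constant. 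Markov's inequality then gives $\Pr(E)\le \Exp[W_T]/m^{2c}\le \pi(\S_0)^{-1/2}m^{-2c}\le\sqrt{2m}\,m^{-2c}$, using $\pi(\S_0)\ge\pi_{\min}\ge 1/(2m)$. A short case check finishes $\Pr(E)\le1/c$: for $c\le1$ it is trivial since $1/c\ge1\ge\Pr(E)$; for $c>1$ the inequality $\sqrt{2m}\,m^{-2c}\le1/c$ follows from $m^{2c}$ growing exponentially against the linear $c\sqrt{2m}$; and once $\sqrt{c}\,\theta_T$ exceeds the maximal (lazy) conductance $1/2$, the event $E$ is empty outright.

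The main obstacle is the growth gauge itself. Everything after it is a clean supermartingale/Markov computation, but the inequality $\Exp_K[\sqrt{\pi(\S')}\mid\S]\le\sqrt{\pi(\S)}\sqrt{1-\phi(\S)^2}$ is where the real work lies: one expresses $\pi(\S')$ through the uniform threshold $U$ defining the ESP step, combines the martingale identity $\Exp_K[\pi(\S')\mid\S]=\pi(\S)$ with the concavity of the square root to convert a \emph{second-moment} (fluctuation) estimate into the deficit $1-\Exp_K[\sqrt{\pi(\S')/\pi(\S)}]$, and finally bounds that fluctuation below by the boundary flow $Q(\S,\S^c)$, i.e.\ by $\phi(\S)$. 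I would import this step rather than reprove it, since it is standard \cite{morris2005evolving,andersen2009finding,levin2017markov} and orthogonal to the present application.
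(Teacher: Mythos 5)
First, a point of comparison: the paper itself gives no proof of this lemma --- it is imported verbatim as Corollary 1 of Andersen and Peres \cite{andersen2009finding} --- so the right benchmark is the argument in that cited source, and your architecture matches it: Morris--Peres growth gauge, Doob $h$-transform with $h=\pi$ to pass to the volume-biased chain, a compensated supermartingale, optional stopping, and Markov. Those steps are all executed correctly; in particular the transform identity $\Exp_{\hat{K}}[f(\S_{t+1})\mid\S_t]=\pi(\S_t)^{-1}\Exp_K[\pi(\S_{t+1})f(\S_{t+1})\mid\S_t]$, the verification that $W_t$ is a supermartingale, the bound $\Phi_T\geq 1$, and $\pi(\S_0)\geq 1/(2m)$ are all sound.

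The genuine gap sits in the one ingredient you chose to import without proof. The gauge $\Exp_K[\sqrt{\pi(\S_{t+1})}\mid\S_t]\leq\sqrt{\pi(\S_t)}\sqrt{1-\phi(\S_t)^2}$ is \emph{false} under this paper's normalization, where $\phi(\S)=|\E(\S,\S^c)|/d(\S)$ is the conductance of the \emph{non-lazy} kernel while the ESP is driven by the lazy walk, whose boundary flow is $\phi/2$. The correct statement (Morris--Peres for lazy chains; alternatively a short Jensen computation on the two halves $U\leq 1/2$ and $U>1/2$, whose conditional means are $\pi(\S)(1\pm\phi(\S))$) is
\[
\Exp_K\big[\sqrt{\pi(\S')/\pi(\S)}\,\big|\,\S\big]
\leq \tfrac{1}{2}\big(\sqrt{1+\phi(\S)}+\sqrt{1-\phi(\S)}\big)
\leq \sqrt{1-\phi(\S)^2/4}
\leq 1-\phi(\S)^2/8,
\]
i.e.\ your form is correct only if $\phi$ there means the lazy-kernel conductance $\phi(\S)/2$ (the same slip surfaces in your closing remark that the maximal conductance is $1/2$). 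Your stronger version already fails on the single-edge graph: there $\phi(\{v\})=1$, so your bound forces the expectation to vanish, yet a direct computation gives $\Exp[\sqrt{\pi(\S')/\pi(\S)}]=1/\sqrt{2}$. This factor matters downstream: with the correct gauge, on the bad event you get $W_T>m^{c/2}$ rather than $m^{2c}$, and your concluding case check $\sqrt{2m}\,m^{-c/2}\leq 1/c$ fails for small $m$ and moderate $c$ (e.g.\ $m=4$, $c=2$ gives $0.71>0.5$). The repair is the route the cited source takes to obtain the $1-1/c$ form exactly: apply plain Markov, $\Pr[W_T\geq c\,W_0]\leq 1/c$, so that on the complementary event $\exp(T\phi_{\min}^2/8)\leq c\sqrt{2m}$, hence $\phi_{\min}\leq\sqrt{8\ln(c\sqrt{2m})/T}$; matching this threshold to $\sqrt{c}\,\theta_T$ is then a bookkeeping step of the same flavor as in Andersen--Peres, rather than the exponential-Markov calibration you set up, which only goes through with the overstated gauge.
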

\end{itemize}

\noindent
Using these lemmas, combined with our Lemma \ref{lem:diff-core}, we can prove the theorem below.
It corresponds to Theorem \ref{thm:ESP-seedset} when setting the parameters $\alpha=5$ and $\beta = 5/2$.

\begin{theorem}
Fix constants $\alpha,\beta > 0$ and a parameter $M \geq 0$.
Let $\S \subseteq \V$ be such that
\[
\phi(\S)
\leq \frac{1}{480\alpha} \frac{\gamma^2}{\log m}.
\]
Let $\S' \subseteq \S$ be as defined in Lemma \ref{lem:diff-core}, and assume that $\S_0 = \{v\}$ for some $v \in \S'$.
Let $\S_\tau$ be the set returned by the ESP with stopping rule $\tau(T,B,\theta)$ and parameters $T = 4\alpha\theta^{-2} \log m$, $B = 5 \alpha M \sqrt{T \log m}$, and $\theta = \gamma$.
Then with probability at least $1 - 2\alpha^{-1} - \beta^{-1}$ we have that
\[
\frac{d(\S_\tau \cap \S_d)}{d(\S_\tau)} \geq 1 - \beta/10,
\]
and either
\[
\phi(\S_\tau) \leq \gamma \quad \text{ or } \quad
d(\S_\tau) \geq M.
\]
The complexity of generating this set is $O(M \gamma^{-1} \log^2 m)$.
\end{theorem}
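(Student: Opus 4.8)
The plan is to establish the three conclusions — the diffusion-core overlap, the conductance-or-size dichotomy, and the complexity bound — as (largely) independent events and then combine them by a union bound, so that the total failure probability is at most $2\alpha^{-1} + \beta^{-1}$. Concretely, I would control the overlap using Lemma \ref{lem:overlap}, the dichotomy using Lemmas \ref{lem:cond} and \ref{lem:size}, and the complexity using Theorem \ref{thm:ESP-cost}. The delicate point, which dictates the precise hypothesis $\phi(\S) \leq \gamma^2/(480\alpha\log m)$, is that I want the ESP to stay inside the \emph{diffusion core} $\S_d$ and not merely inside $\S$.

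First I would establish the overlap conclusion. The naive move is to apply Lemma \ref{lem:overlap} directly to $\S$, but that only bounds $d(\S_t\cap\S)/d(\S_t)$; instead I would apply it with $\S_d$ playing the role of the tracked set. This requires control of $\Pr(\tau_v(\S_d^c)\leq T)$, which is exactly what Lemma \ref{lem:diff-core} supplies: for $v\in\S'\subseteq\S_d$ a $(120\phi(\S))^{-1}$-step walk stays in $\S_d$ with probability at least $9/10$. Since $T = 4\alpha\theta^{-2}\log m = 4\alpha\gamma^{-2}\log m$, the hypothesis $\phi(\S)\leq\gamma^2/(480\alpha\log m)$ is precisely the statement $T\leq(120\phi(\S))^{-1}$, so $\Pr(\tau_v(\S_d^c)\leq T)\leq 1/10$. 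Feeding this into Lemma \ref{lem:overlap} gives, with probability at least $1-\beta^{-1}$, that $\min_{t\leq T} d(\S_t\cap\S_d)/d(\S_t)\geq 1-\beta/10$; since $\tau\leq T$, the stopped set then satisfies $d(\S_\tau\cap\S_d)/d(\S_\tau)\geq 1-\beta/10$.

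Next I would prove the dichotomy $\phi(\S_\tau)\leq\gamma$ or $d(\S_\tau)\geq M$ by a case analysis on which of the three conditions caused the stopping rule $\tau(T,B,\theta)$ to fire. If it fired because $\phi(\S_\tau)\leq\theta=\gamma$, we are done. To rule out running all the way to $\tau=T$, I would invoke Lemma \ref{lem:cond}: with the chosen $T$ one computes $\theta_T=\sqrt{4T^{-1}\log m}=\gamma/\sqrt{\alpha}$, so taking $c=\alpha$ gives $\sqrt{c}\,\theta_T=\gamma$, and with probability at least $1-\alpha^{-1}$ some set strictly before time $T$ has conductance at most $\gamma$ — forcing the conductance or the cost branch to fire before $T$. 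Finally, if it fired because $\cost_\tau>B$, I would use Lemma \ref{lem:size} with Markov's inequality: with probability at least $1-\alpha^{-1}$ one has $\cost_\tau/d(\S_\tau)\leq\alpha(1+4\sqrt{T\log m})\leq 5\alpha\sqrt{T\log m}$, whence $d(\S_\tau)\geq\cost_\tau/(5\alpha\sqrt{T\log m})> B/(5\alpha\sqrt{T\log m})=M$ by the choice $B=5\alpha M\sqrt{T\log m}$. Combining the overlap event with these two events, each failing with probability at most $\alpha^{-1}$, by a union bound yields the claimed success probability $1-2\alpha^{-1}-\beta^{-1}$.

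The complexity is the easiest part: Theorem \ref{thm:ESP-cost} bounds the cost of simulating the ESP with this stopping rule by $O(B\log m)$, and substituting $B=5\alpha M\sqrt{T\log m}$ with $\sqrt{T\log m}=2\sqrt{\alpha}\,\gamma^{-1}\log m$ gives $O(\alpha^{3/2}M\gamma^{-1}\log^2 m)=O(M\gamma^{-1}\log^2 m)$ for constant $\alpha$. I expect the main obstacle to be the overlap step: the whole refinement over the standard ESP analysis of \cite{andersen2016almost} lies in tracking containment in the smaller set $\S_d$ rather than in $\S$, and in checking that the constants of Lemma \ref{lem:diff-core} (the $1/10$ escape bound and the $(120\phi)^{-1}$ horizon) mesh exactly with the parameter choices $T$, $\theta$ and the hypothesis on $\phi(\S)$ so that all three tail bounds align to give the stated probability.
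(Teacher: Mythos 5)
Your proposal is correct and follows essentially the same route as the paper's own proof: Lemma \ref{lem:overlap} applied to the tracked set $\S_d$ (via Lemma \ref{lem:diff-core} and the observation that the hypothesis on $\phi(\S)$ is exactly $T \leq (120\phi(\S))^{-1}$) for the overlap, Lemma \ref{lem:cond} with $c=\alpha$ to exclude $\tau = T$, Lemma \ref{lem:size} plus Markov to handle the cost-triggered stop, and a union bound to combine the three events. The only cosmetic difference is that the paper organizes the dichotomy bound by conditioning on which stopping condition fired, whereas you take a direct union bound over the good events; the lemmas, parameter checks, and resulting probability are identical.
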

\begin{proof}
Let $X$ denote the event that $d(\S_\tau \cup \S_d)/d(\S_\tau) \geq 1-\beta/10$, and $Y$ the event that $\phi(\S_\tau) \leq \gamma$ or $d(\S_\tau) \geq M$.
We will show that $\Pr(X) \geq 1 - \beta^{-1}$ and $\Pr(Y) \geq 1 - 2\alpha^{-1}$, so that by the union bound we find that $\Pr(X \cap Y) \geq \Pr(X) + \Pr(Y) - 1 \geq 1 - 2\alpha^{-1} - \beta^{-1}$.
This proves the statements on the output set.
The complexity statement follows from Theorem \ref{thm:ESP-cost}.

Towards bounding $\Pr(X)$ we combine Lemma \ref{lem:overlap} with Lemma \ref{lem:diff-core}.
Since $v \in \S'$, we know that
\[
\Pr(\tau_v(\S_d^c) \leq (120 \phi(\S))^{-1})
\leq 1/10.
\]
Our assumption on $\phi(\S)$ implies that $T \leq (120\phi(\S))^{-1}$, and so by Lemma \ref{lem:overlap} we get that
\[
\Pr\big( {\textstyle \min_{t\leq T}}\; d(\S_t \cap \S_d)/d(\S_t)
\geq 1 - \beta/10 \big)
\geq 1 - \beta^{-1}.
\]
Since $\tau \leq T$, the left-hand side lower bounds $\Pr(X)$, so that $\Pr(X) \geq 1 - \beta^{-1}$.

Towards bounding $\Pr(Y)$, we condition it on the possible stopping rule outcomes:
\begin{align*}
\Pr(Y)
&= \Pr(Y | \phi(\S_\tau) \leq \theta) \, \Pr(\phi(\S_\tau) \leq \theta)
+ \Pr(Y | \cost_\tau > B) \, \Pr(\cost_\tau > B) \\
&\quad + \Pr(Y | \tau = T) \, \Pr(\tau = T).
\end{align*}
Since $\theta = \gamma$ we know that $\Pr(Y | \phi(\S_\tau) \leq \theta) = 1$.
Next we wish to bound the second term by using Lemma \ref{lem:size}.
In combination with Markov's inequality this states that $\Pr(Z_\alpha) \leq 1/\alpha$, where we let $Z_\alpha$ denote the event that
\[
\cost_\tau/d(\S_\tau)
\geq \alpha(1+4\sqrt{T\log m})
= \alpha + 4B/(5M).
\]
If we lower bound $\Pr(Z_\alpha) \geq \Pr(Z_\alpha | \cost_\tau > B)\, \Pr(\cost_\tau > B)$, then we find $
\Pr(Z_\alpha | \cost_\tau > B) \, \Pr(\cost_\tau > B) \leq \alpha^{-1}$.
With $\bar{Z}_\alpha$ the negation of $Z_\alpha$, this leads to the bound
\[
\Pr(\bar{Z}_\alpha | \cost_\tau > B) \, \Pr(\cost_\tau > B)
\geq \Pr(\cost_\tau > B) - \alpha^{-1}.
\]
Now if both $\cost_\tau>B$ and $\bar{Z}_\alpha$ hold, then
\[
d(\S_\tau)
> \frac{\cost_\tau}{\alpha + 4B/(5M)}
> \frac{B}{\alpha + 4B/(5M)}
> M,
\]
using the bound $B > 5\alpha M$.
As a consequence, if both $\cost_\tau>B$ and $Z_\alpha$ hold, then also $Y$ holds, and so $\Pr(Y | \cost_\tau > B) \geq \Pr(Z_\alpha \ \cost_\tau > B)$.
This gives the desired bound
\begin{equation} \label{eq:PZB-bnd}
\Pr(\cost_\tau > B) \, \Pr(Y | \cost_\tau > B)
\geq \Pr(\cost_\tau > B) - \alpha^{-1}.
\end{equation}
Finally we will upper bound $\Pr(\tau = T)$ using Lemma \ref{lem:cond}.
For $c = \alpha$ this states that
\[
\Pr\big({\textstyle\min_{t<T}}\; \phi(\S_t) \leq \gamma\big)
\geq 1 - \alpha^{-1}.
\]
Since $\min_{t<T} \phi(\S_t) \leq \gamma$ implies that $\tau < T$, this shows that
\begin{equation} \label{eq:tauT-bnd}
\Pr(\tau = T)
\leq 1 - \Pr\big({\textstyle\min_{t<T}}\; \phi(\S_t) \leq \gamma\big)
\leq \alpha^{-1}.
\end{equation}
If now we combine the bounds \eqref{eq:PZB-bnd} and \eqref{eq:tauT-bnd} we get the final bound on $\Pr(Y)$:
\[
\Pr(Y)
\overset{\eqref{eq:PZB-bnd}}{\geq} \Pr(\phi(\S_\tau) \leq \theta)
	+ \Pr(\cost_\tau > B) - \alpha^{-1}
\geq 1 - \Pr(\tau = T) - \alpha^{-1}
\overset{\eqref{eq:tauT-bnd}}{\geq} 1 - 2\alpha^{-1}. \qedhere
\]
\end{proof}

\end{document}